\documentclass[11pt]{article}
\usepackage{latexsym}
\usepackage{tabularx,booktabs,multirow,delarray,array}
\usepackage{graphicx,amssymb,amsmath,amssymb}
\usepackage{enumerate,authblk}

\aboverulesep=0pt
\belowrulesep=0pt

\oddsidemargin=0.0in \evensidemargin=0.0in \headheight=0.0in
\topmargin=-0.40in 
\textheight=9.0in 
\textwidth=6.5in   





\def\maxop{{\sc Max}}
\def\sumop{{\sc Sum}}

\def\annmax{ANN-\maxop}
\def\annsum{ANN-\sumop}
\newcommand{\fvd}{\mbox{$F\!V\!D$}}
\newcommand{\vd}{\mbox{$V\!D$}}


\begin{document}

\baselineskip=14.0pt

\title{\vspace*{-0.55in} Aggregate-{\sc Max} Nearest Neighbor Searching in
the Plane}

\author{
Haitao Wang\thanks{Department of Computer Science,
Utah State University, Logan, UT 84322, USA.
E-mail: {\tt haitao.wang@usu.edu}.
}
}


\date{}

\maketitle

\thispagestyle{empty}

\newtheorem{lemma}{Lemma}
\newtheorem{theorem}{Theorem}
\newtheorem{corollary}{Corollary}
\newtheorem{fact}{Fact}
\newtheorem{definition}{Definition}
\newtheorem{observation}{Observation}
\newtheorem{condition}{Condition}
\newtheorem{property}{Property}
\newtheorem{claim}{Claim}
\newenvironment{proof}{\noindent {\textbf{Proof:}}\rm}{\hfill $\Box$
\rm}

\pagestyle{plain}
\pagenumbering{arabic}
\setcounter{page}{1}

\vspace*{-0.2in}
\begin{abstract}
We study the aggregate/group nearest neighbor searching for the \maxop\
operator in the plane. For a set $P$ of $n$ points and a query set
$Q$ of $m$ points, the query asks for a point of $P$
whose maximum distance to the points in $Q$ is minimized. We present
data structures for answering such queries for both $L_1$ and $L_2$
distance measures. Previously, only heuristic and approximation algorithms were given
for both versions.  For the $L_1$ version, we build a data structure
of $O(n)$ size in $O(n\log n)$ time, such that each query can be
answered in $O(m+\log n)$ time.
For the $L_2$ version, we build a data
structure in $O(n\log n)$ time and $O(n\log \log n)$
space, such that each query can be answered in
$O(m\sqrt{n}\log^{O(1)} n)$ time, and alternatively, we
build a data structure in $O(n^{2+\epsilon})$ time and space for any $\epsilon>0$,
such that each query can be answered in $O(m\log n)$ time.
Further, we extend our result for the $L_1$ version to the top-$k$ queries where each query asks for the $k$ points of $P$ whose maximum distances to $Q$ are the smallest for any $k$ with $1\leq k\leq n$: We build a data structure
of $O(n)$ size in $O(n\log n)$ time, such that each top-$k$ query can be
answered in $O(m+k\log n)$ time.
\end{abstract}


\section{Introduction}
\label{sec:intro}

Aggregate nearest neighbor (ANN) searching
\cite{ref:AgarwalNe12,ref:LiGr11,ref:LiTw05,ref:LiFl11,ref:LianPr08,ref:LuoEf07,ref:PapadiasGr04,ref:PapadiasAg05,ref:SharifzadehVo10,ref:WangTh13arXiv,ref:YiuAg05},
also called group nearest neighbor searching,
is a generalization of the
fundamental nearest neighbor searching problem \cite{ref:AryaAn98},
where the input of each query is a set of
points and the result of the query is based on applying some
{\em aggregate} operator (e.g., \maxop\ and \sumop) on all query points.
In this paper, we consider the ANN searching on the \maxop\ operator
for both $L_1$ and $L_2$ metrics in the plane.

For any two points $p$
and $q$, let $d(p,q)$ denote the distance between $p$ and $q$.
Let $P$ be a set of $n$ points in the plane.
Given any query set
$Q$ of $m$ points, the ANN query asks for a point $p$ in $P$ such that
$g(p,Q)$ is minimized, where $g(p,Q)$ is the {\em aggregate function} of
the distances from $p$ to the points of $Q$.
The aggregate functions commonly considered are \maxop, i.e.,
$g(p,Q)=\max_{q\in Q}d(p,q)$, and \sumop,
i.e., $g(p,Q)=\sum_{q\in Q}d(p,q)$.
If the operator for $g$ is \maxop\ (resp., \sumop), we use \annmax\
(resp., \annsum) to denote the problem.

In this paper, we focus on \annmax\ in the plane for both $L_1$ and
$L_2$ versions where the distance $d(p,q)$ is measured by $L_1$ and
$L_2$ metrics, respectively.


Previously, only heuristic and approximation algorithms were given
for both versions.  For the $L_1$ version, we build a data structure
of $O(n)$ size in $O(n\log n)$ time, such that each query can be
answered in $O(m+\log n)$ time.
For the $L_2$ version, we build a data
structure in $O(n\log n)$ time and $O(n\log \log n)$
space, such that each query can be answered in
$O(m\sqrt{n}\log^{O(1)} n)$ time, and alternatively, we
build a data structure in $O(n^{2+\epsilon})$ time and space for any $\epsilon>0$,
such that each query can be answered in $O(m\log n)$ time.

Furthermore, we extend our result for the $L_1$ version to the following \annmax\ {\em top-$k$} queries. In addition to a query set $Q$, each top-$k$ query is also given an integer $k$ with $1\leq k\leq n$, and the query asks for the $k$ points $p$ of $P$ whose values $g(p,Q)$ are the smallest. We build a data structure
of $O(n)$ size in $O(n\log n)$ time, such that each $L_1$ \annmax\ top-$k$ query can be
answered in $O(m+k\log n)$ time.

\subsection{Previous Work}

For \annmax, Papadias et al. \cite{ref:PapadiasAg05}
presented a heuristic Minimum Bounding Method with worst case query time
$O(n+m)$ for the $L_2$ version. Recently, Li et al. \cite{ref:LiGr11}
gave more results on the $L_2$ \annmax\ (the queries were called {\em group
enclosing queries}). By using $R$-tree
\cite{ref:GuttmanR84}, Li et al. \cite{ref:LiGr11}
 gave an exact algorithm to answer \annmax\
queries, and the algorithm is very fast in practice but theoretically the
worst case query time is still $O(n+m)$. Li et al. \cite{ref:LiGr11}
also gave a $\sqrt{2}$-approximation algorithm with query time
$O(m+\log n)$ and the algorithm works for any fixed dimensions, and
they further extended the algorithm to obtain a
$(1+\epsilon)$-approximation result.
To the best of our knowledge, we are
not aware of any previous work that is particularly for the $L_1$
\annmax. However, Li et al. \cite{ref:LiFl11} proposed the {\em
flexible} ANN queries, which extend the classical ANN queries, and
they provided an $(1+2\sqrt{2})$-approximation algorithm that works
for any metric space in any fixed dimension.

For \annsum, a $3$-approximation solution is given in
\cite{ref:LiFl11} for the $L_2$ version. Agarwal et al.
\cite{ref:AgarwalNe12} studied nearest neighbor searching under
uncertainty, and their results can give an
$(1+\epsilon)$-approximation solution for the $L_2$ \annsum\ queries. They \cite{ref:AgarwalNe12}
also gave an exact algorithm that can solve the
$L_1$ \annsum\ problem and an improvement based on their work has been
made in \cite{ref:WangTh13arXiv}.

There are also other heuristic algorithms on ANN queries, e.g.,
\cite{ref:LiTw05,ref:LianPr08,ref:LuoEf07,ref:PapadiasGr04,ref:SharifzadehVo10,ref:YiuAg05}.

Comparing with $n$, the value $m$ is relative small in
practice. Ideally we want a solution that has a query time
$o(n)$. Our $L_1$ \annmax\ solution is the first-known exact
solution and is likely to be the best-possible. Comparing with the heuristic result
\cite{ref:LiGr11,ref:PapadiasAg05} with $O(m+n)$ worst
case query time, our $L_2$ \annmax\ solution use $o(n)$ query
time for small $m$; it should be noted that the methods in
\cite{ref:LiGr11,ref:PapadiasAg05} uses only $O(n)$ space while the
space used in our approach is larger.

In the following, we give our algorithm for the $L_1$ \annmax\ queries
in Section \ref{sec:l1} and its extension to the top-$k$ is also given
in the same section. Our result for the $L_2$ metric is presented in
Section \ref{sec:L2}. 


\section{The \annmax\ in the $L_1$ Metric}
\label{sec:l1}

In this section, we present our solution for the $L_1$ version of
\annmax\ queries as well as its extension to the top-$k$ queries. We first focus on the \annmax\ queries.
Given any query point set $Q$, our goal is to
find the point $p\in P$ such that $g(p,Q)=\max_{q\in Q}d(p,q)$ is minimized
for the $L_1$ distance $d(p,q)$, and we denote by
$\psi(Q)$ the above sought point.

For each point $p$ in the plane, denote by $p_{\max}$ the farthest
point of $Q$ to $p$.
We show below that $p_{\max}$
must be an extreme point of $Q$ along one of the four {\em diagonal} directions:
northeast, northwest, southwest, southeast.

\begin{figure}[t]
\begin{minipage}[t]{0.49\linewidth}
\begin{center}
\includegraphics[totalheight=1.5in]{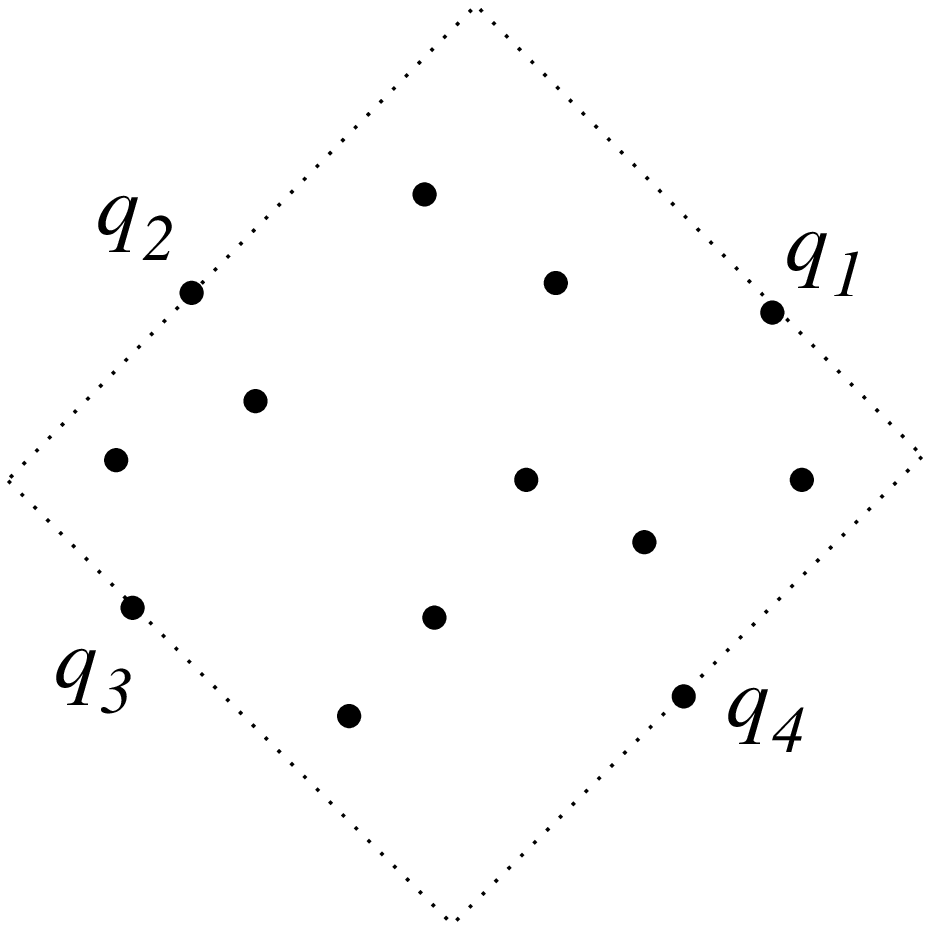}
\caption{\footnotesize Illustrating the four extreme points $q_1,q_2,q_3,q_4$.}
\label{fig:extreme}
\end{center}
\end{minipage}
\hspace*{0.05in}
\begin{minipage}[t]{0.49\linewidth}
\begin{center}
\includegraphics[totalheight=1.5in]{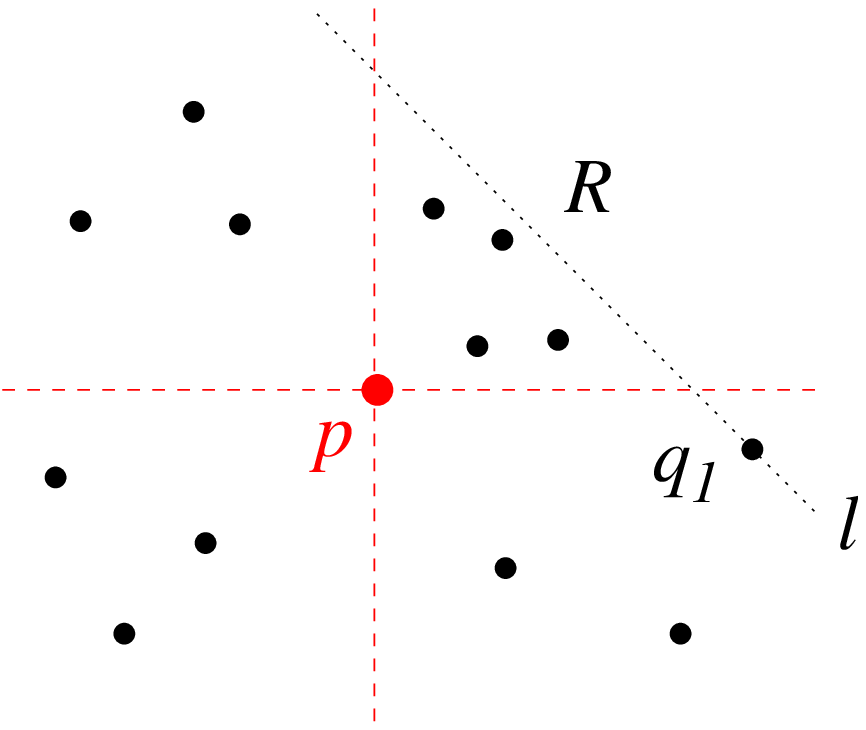}
\caption{\footnotesize Illustrating the proof for Lemma
\ref{lem:10}: $Q$ consists of all points except $p$ and $q_1$ is in
the fourth quadrant.}
\label{fig:fourth}
\end{center}
\end{minipage}
\end{figure}

Let $\rho_1$ be a ray directed to the ``northeast'', i.e., the angle
between $\rho$ and the $x$-axis is $\pi/4$. Let $q_1$ be an extreme point of
$Q$ along $\rho_1$ (e.g., see
Fig.~\ref{fig:extreme}); if there is more than one such
point, we let $q_1$ be an arbitrary such point.
Similarly, let $q_2$, $q_3$, and $q_4$ be the extreme points along
the directions northwest, southwest, and southeast, respectively. Let
$Q_{\max}=\{q_1,q_2,q_3,q_4\}$. Note that
$Q_{\max}$ may have less than four {\em distinct} points if two or more points of
$Q_{\max}$ refer to the same (physical) point of $Q$.
The following lemma shows that $g(p,Q)$ is determined only by the
points of $Q_{\max}$.

\begin{lemma}\label{lem:10}
For any point $p$ in the plane, $g(p,Q)=\max_{q\in
Q_{\max}}d(p,q)$ holds.
\end{lemma}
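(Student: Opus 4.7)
The plan is to show that for any point $p$ in the plane, the farthest point $p_{\max}\in Q$ from $p$ must coincide (in terms of its distance to $p$) with one of the four extreme points $q_1,q_2,q_3,q_4$. The crucial observation is that the $L_1$ distance from $p$ to a point $q$ takes a very simple linear form once we know in which of the four quadrants around $p$ the point $q$ lies. For example, if $q$ lies in the northeast quadrant of $p$, then $d(p,q)=(q_x-p_x)+(q_y-p_y)=(q_x+q_y)-(p_x+p_y)$; the other three quadrants give analogous expressions involving $-x+y$, $-x-y$, or $x-y$.

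First I would consider the quadrant (with respect to $p$) that contains $p_{\max}$ and handle one representative case, say the northeast; the other three are completely symmetric. In this case $d(p,p_{\max})=(p_{\max,x}+p_{\max,y})-(p_x+p_y)$. Since $q_1$ is chosen to be a point of $Q$ maximizing $q_x+q_y$, we have $q_{1,x}+q_{1,y}\ge p_{\max,x}+p_{\max,y}$. Combining this with the elementary inequality $|a|\ge a$ applied coordinatewise gives
\begin{align*}
d(p,q_1) &= |q_{1,x}-p_x|+|q_{1,y}-p_y| \\
         &\ge (q_{1,x}-p_x)+(q_{1,y}-p_y) \\
         &\ge (p_{\max,x}-p_x)+(p_{\max,y}-p_y) = d(p,p_{\max}).
\end{align*}
By the defining property of $p_{\max}$ the reverse inequality $d(p,q_1)\le d(p,p_{\max})$ also holds, so the two are equal, and therefore $\max_{q\in Q_{\max}}d(p,q)=d(p,p_{\max})=g(p,Q)$.

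The main subtlety I would highlight (and which the figure caption for Figure~\ref{fig:fourth} seems to emphasize) is that the argument must be robust against the possibility that $q_1$ itself does not sit in the same quadrant of $p$ as $p_{\max}$; for instance $q_1$ could lie in the fourth (southeast) quadrant of $p$, in which case its ``natural'' $L_1$-distance formula is $(q_{1,x}-p_x)+(p_y-q_{1,y})$, not $(q_{1,x}+q_{1,y})-(p_x+p_y)$. The point of using $|a|\ge a$ in the chain of inequalities above is precisely to absorb this discrepancy: the bound $(q_{1,x}-p_x)+(q_{1,y}-p_y)$ may be smaller than the true $d(p,q_1)$, but that only helps us. Once this is made explicit, the remaining three quadrant cases are obtained by replacing $q_1$ with $q_2,q_3,q_4$ and $x+y$ with $-x+y,-x-y,x-y$ respectively, which completes the proof.
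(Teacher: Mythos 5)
Your proof is correct, and it reaches the conclusion by a cleaner, more uniform route than the paper. The paper's proof fixes the quadrant $R$ of $p$ containing $p_{\max}$ and then splits into two cases according to whether $q_1$ lies in $R$: when it does, the argument is essentially your linear-form observation restricted to $R$; when it does not, the paper resorts to a geometric argument, introducing the slope~$-1$ line $l$ through $q_1$, the iso-distance segment $s=l\cap R$, and chaining $d(p,q_1)\geq d(p,s)\geq d(p,p_{\max})$. Your use of the coordinatewise bound $|a|\geq a$, giving $d(p,q_1)\geq (q_{1,x}+q_{1,y})-(p_x+p_y)\geq (p_{\max,x}+p_{\max,y})-(p_x+p_y)=d(p,p_{\max})$, absorbs both cases at once and eliminates the case analysis on the location of $q_1$ entirely; the only remaining case split is the choice of quadrant for $p_{\max}$, which is present in both proofs and is handled by symmetry. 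Both arguments rest on the same underlying fact --- that the extreme point in the direction $(1,1)$ maximizes the linear functional $x+y$ that governs the $L_1$ distance from $p$ to the closed northeast quadrant --- but yours trades the paper's picture-based reasoning for a two-line algebraic inequality, which is arguably easier to verify and generalizes immediately to $L_1$ in higher dimensions (with $2^d$ diagonal directions). One small point worth making explicit: you should note that $\max_{q\in Q_{\max}}d(p,q)\leq g(p,Q)$ holds trivially because $Q_{\max}\subseteq Q$, so that the single inequality $d(p,q_1)\geq d(p,p_{\max})$ indeed yields the claimed equality; you use this implicitly in your final sentence.
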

\begin{proof}
Let $p$ be any point in the plane. If $p_{max}\in Q_{\max}$,
the lemma simply follows, otherwise, we show below
that there exists a point $q'\in Q_{\max}$ such that
$d(p,q')\geq d(p,p_{\max})$, which proves the lemma.

The vertical line and horizontal line through the point $p$ partition
the plane into four quadrants. Without loss of generality, we assume
$p_{\max}$ is in the first quadrant (i.e., the northeast quadrant)
including its boundary, and we denote the quadrant by $R$.
Recall that $q_1\in Q_{\max}$ is an extreme point of $Q$ along the
northeast direction. Depending on whether $q_1\in R$, there are two
cases.

\begin{enumerate}
\item

If $q_1\in R$, since $q_1$ is an extreme point of $Q$ along
the northeast direction, we have $d(p,q_1)=\max_{q\in Q\cap
R}d(p,q)$. Due to $p_{max}\in Q\cap R$, $d(p,q_1)\geq d(p,p_{\max})$
holds.

\item
If $q_1\not\in R$, then since $p_{\max}$ is in $R$, by the definition
of $q_1$, $q_1$ is either in the second quadrant or in the
fourth quadrant. Without loss of generality,
we assume $q_1$ is in the fourth quadrant (e.g., see
Fig.~\ref{fig:fourth}).


Let $l$ be the line through $q_1$ with slope
$-1$ and denote by $s$ the line segment that is the intersection of
$l$ and $R$. According to the definition of the $L_1$ distance
measure, all points on $s$ have the same $L_1$ distance
to $p$, and we denote by $d(p,s)$ the $L_1$ distance between $p$ and
any point on $s$.
Since $q_1$ is an extreme point of $Q$ along the
northeast direction, all points of $Q$ are below or on the line $l$.
This implies $d(p,s)\geq d(p,q)$ for any $q\in Q\cap R$, and in
particular, $d(p,s)\geq d(p,p_{\max})$. On the other hand, $q_1$ is on
$l$ and $q_1\not\in R$, we have $d(p,q_1)\geq d(p,s)$. Hence, we
obtain $d(p,q_1)\geq d(p,p_{\max})$.
\end{enumerate}

The lemma thus follows.
\end{proof}

Based on Lemma \ref{lem:10}, for any point $p$ in the plane,
to determine $g(p,Q)$, we only need to consider the points in
$Q_{\max}$. Note that a point may have more than one farthest point in $Q$.
If $p$ has only one farthest point in $Q$,
then $p_{\max}$ is in $Q_{\max}$. Otherwise,
$p_{\max}$ may not be in $Q_{\max}$, and for
convenience we re-define $p_{\max}$ to be the farthest point of
$p$ in $Q_{\max}$.

For each $1\leq i\leq 4$, let $P_i=\{p\ |\ p_{\max}=q_i,
p\in P\}$, i.e., $P_i$ consists of the points of $P$ whose farthest points in $Q$ are $q_i$, and let $p_i$ be the nearest point of $q_i$ in
$P_i$. To find $\psi(Q)$, we have the following lemma.

\begin{lemma}\label{lem:20}
$\psi(Q)$ is the point $p_j$ for some $j$ with $1\leq j\leq 4$,
such that $d(p_j,q_j)\leq d(p_i,q_i)$ holds for any $1\leq i\leq 4$.
\end{lemma}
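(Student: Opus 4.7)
The plan is to reduce the optimization over all of $P$ to an optimization over the four candidates $p_1, p_2, p_3, p_4$ by exploiting that the sets $P_1, \ldots, P_4$ partition $P$ (after ties in the definition of $p_{\max}$ are broken arbitrarily but consistently). The single key ingredient is Lemma~\ref{lem:10}, which will collapse the maximum over $Q_{\max}$ into a single distance evaluation for each $p \in P_i$.

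First I would observe that, by Lemma~\ref{lem:10} together with the (re)definition of $p_{\max}$ as the farthest point of $p$ in $Q_{\max}$, any $p \in P_i$ satisfies
\[
g(p,Q) \;=\; \max_{q \in Q_{\max}} d(p,q) \;=\; d(p, p_{\max}) \;=\; d(p, q_i).
\]
Thus inside a single class $P_i$, minimizing $g(\cdot, Q)$ is the same as minimizing the $L_1$ distance to $q_i$, and by definition $p_i$ is the minimizer. Hence $\min_{p \in P_i} g(p,Q) = d(p_i, q_i)$ whenever $P_i \neq \emptyset$.

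Next, since $\{P_1, P_2, P_3, P_4\}$ is a partition of $P$, I would write
\[
\min_{p \in P} g(p,Q) \;=\; \min_{1 \leq i \leq 4}\, \min_{p \in P_i} g(p,Q) \;=\; \min_{1 \leq i \leq 4} d(p_i, q_i),
\]
where we adopt the convention $d(p_i,q_i) = +\infty$ whenever $P_i = \emptyset$ (so such $i$ cannot be the argmin unless $P$ itself is empty). Let $j$ attain this minimum; then $p_j \in P$ and $g(p_j, Q) = d(p_j, q_j) \leq g(p, Q)$ for all $p \in P$, so $p_j = \psi(Q)$, which is exactly the claim.

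Since the argument is essentially a direct bookkeeping on top of Lemma~\ref{lem:10}, I do not anticipate a genuine obstacle; the only subtle point is being explicit that the ties in the definition of $p_{\max}$ (and hence in the assignment of a point of $P$ to some $P_i$) are broken so that the $P_i$ really do partition $P$, and that empty $P_i$ are handled by convention so they do not spuriously become the minimizer.
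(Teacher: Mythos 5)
Your proof is correct and follows essentially the same route as the paper: both split the minimization over $P$ into the partition $P_1,\ldots,P_4$, use Lemma~\ref{lem:10} to reduce $\max_{q\in Q}d(p,q)$ to $d(p,q_i)$ for $p\in P_i$, and then take the minimum of the four candidates. Your explicit handling of tie-breaking and empty $P_i$ is a minor bookkeeping refinement that the paper leaves implicit.
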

\begin{proof}
Recall that $\psi(Q)$ is the point $p\in P$ such that the value
$g(p,Q)=\max_{q\in
Q}d(p,q)$ is minimized. By their definitions, we have the following:
\begin{equation*}
\begin{split}
\min_{p\in P}g(p,Q)=\min_{p\in P}\max_{q\in Q}d(p,q)
=\min_{1\leq i\leq 4}\{\min_{p\in P_i}\max_{q\in Q}d(p,q)\}
=\min_{1\leq i\leq 4}\{\min_{p\in P_i}d(p,q_i)\}
=\min_{1\leq i\leq 4}\{d(p_i,q_i)\}.
\end{split}
\end{equation*}
The lemma thus follows.
\end{proof}

Based on Lemma \ref{lem:20}, to determine $\psi(Q)$, it is sufficient
to determine $p_i$ for each $1\leq i\leq 4$.
To this end, we make use of the farthest Voronoi diagram
\cite{ref:deBergCo08} of the four
points in $Q_{\max}$, which is also the
farthest Voronoi diagram of $Q$ by Lemma \ref{lem:10}.
Denote by $\fvd(Q)$ the farthest Voronoi diagram of $Q_{\max}$.
Since $Q_{\max}$ has only four points, $\fvd(Q)$ can be computed in
constant time, e.g., by an incremental approach. Each point $q\in
Q_{\max}$ defines a cell $C(q)$ in $\fvd(Q)$ such that every point $p\in C(q)$ is
farthest to $q_i$ among all points of $Q_{\max}$.
In order to compute the four points $p_i$ with $i=1,2,3,4$, we first
show in the following that each cell $C(q)$ has certain special
shapes that allow us to make use of the
segment dragging queries \cite{ref:ChazelleAn88,ref:MitchellL192}
to find the four points efficiently.
Note that for each $1\leq i\leq 4$, $P_i=P\cap C(q_i)$ and thus $p_i$
is the nearest point of $P\cap C(q_i)$ to $q_i$.
In fact, the following discussion also gives an incremental
algorithm to compute $\fvd(Q)$ in constant time.

\subsection{The Bisectors}

We first briefly discuss the bisectors of the points based on the $L_1$ metric. In fact, the $L_1$ bisectors have been well studied (e.g., \cite{ref:MitchellL192}) and we discuss them here for completeness and some notation introduced here will also be useful later when we describe our algorithm.

For any two points $q$ and $q'$ in the plane,
define $r(q,q')$ as the region of the
plane that is the locus of the points farther to $q$ than to $q'$,
i.e., $r(q,q')=\{p\ |\ d(p,q)\geq d(p,q')\}$. The {\em bisector} of
$q$ and $q'$, denoted by $B(q,q')$, is the locus of the points that
are equidistant to $q$ and $q'$, i.e., $B(q,q')=\{p\ |\ d(p,q)= d(p,q')\}$.
In order to discuss the shapes of the cells  of $\fvd(Q)$, we
need to elaborate on the shape of $B(q,q')$, as follows.

Let $R(q,q')$ be the rectangle that has
$q$ and $q'$ as its two vertices on diagonal positions (e.g., see
Fig.~\ref{fig:bisector}). In the special case where the line segment
$\overline{qq'}$ is axis-parallel, the rectangle $R(q,q')$ is
degenerated into a line segment and $B(q,q')$ is the line through the
midpoint of $\overline{qq'}$ and perpendicular to $\overline{qq'}$.
Below, we focus on the general case where $\overline{qq'}$ is
not axis-parallel. Without loss of generality, we assume $q$ and
$q'$ are northeast and southwest vertices of $R(q,q')$, and other cases are
similar.

\begin{figure}[t]
\begin{minipage}[t]{\linewidth}
\begin{center}
\includegraphics[totalheight=1.5in]{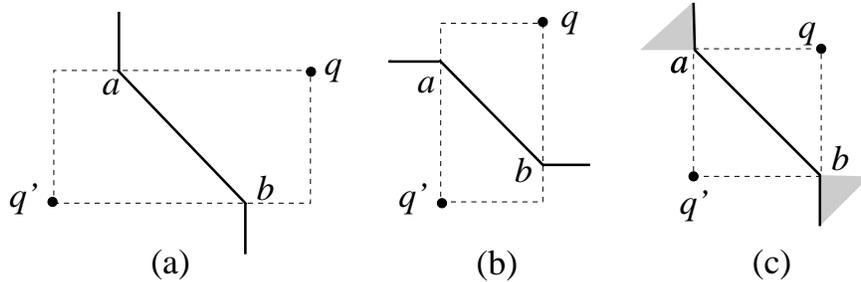}
\caption{\footnotesize Illustrating the bisector $B(q,q')$
(the solid curve) for $q$ and $q'$. In (c), since $R(q,q')$ is a
square, the two shaded quadrants are entirely in $B(q,q')$, but for simplicity, we only consider the two vertical bounding half-lines as in $B(q,q')$. }
\label{fig:bisector}
\end{center}
\end{minipage}
\end{figure}

The bisector $B(q,q')$ consists of two half-lines and one line
segment in between (e.g., see Fig.~\ref{fig:bisector}); the two
half-lines are either both horizontal or both vertical.
More specifically, let $l$ be the line of slope $-1$ that contains
the midpoint of $\overline{qq'}$. Let $\overline{ab}=l\cap R(q,q')$,
and $a$ and $b$ are on the boundary of $R(q,q')$.
Note that if $R(q,q')$ is a square, then $a$ and  $b$ are the other
two vertices of $R(q,q')$ than $q$ and $q'$; otherwise, neither $a$
nor $b$ is a vertex.

We first discuss the case where $R(q,q')$ is not a square (e.g.,
see Fig.~\ref{fig:bisector} (a) and (b)).
Let $l(a)$ be the line through $a$ and perpendicular to the edge of
$R(q,q')$ that contains $a$. The point $a$
divides $l(a)$ into two half-lines, and we let $l'(a)$ be the one
that doest not intersect $R(q,q')$ except $a$.
Similarly, we define the half-line $l'(b)$.
Note that $l'(a)$ and $l'(b)$ must be parallel.
The bisector $B(q,q')$ is the union
of $l'(a)$, $\overline{ab}$, and $l'(b)$.

If $R(q,q')$ is a square, then $a$ and $b$ are both vertices of
$R(q,q')$ (e.g., see Fig.~\ref{fig:bisector} (c)).
In this case, a quadrant of $a$ and a quadrant of $b$ belong to the
bisector $B(q,q')$,  but for simplicity, we consider $B(q,q')$ as the
union of $\overline{ab}$ and the two vertical bounding half-lines of
the two quadrants.

We call $\overline{ab}$ the {\em middle segment} of $B(q,q')$ and
denote it by $B_M(q,q')$. If $B(q,q')$ contains two vertical
half-lines, we call $B(q,q')$ a {\em v-bisector} and refer to the two
half-lines as {\em upper half-line} and
{\em lower half-line}, respectively, based on their
relative positions; similarly, if $B(q,q')$ contains two horizontal
half-lines, we call $B(q,q')$ an {\em h-bisector} and refer to the two
half-lines as {\em left half-line} and
{\em right half-line}, respectively.

For any point $p$ in the plane, we use $l^+(q)$ to denote the line
through $q$ with slope $1$, $l^-(q)$ the line
through $q$ with slope $-1$, $l_h(q)$ the horizontal line through
$q$, and $l_v(q)$ the vertical line through $q$.

\subsection{The Shapes of Cells of $\fvd(Q)$}

In the following, we discuss the shapes of the cells of $\fvd(Q)$.
A subset $Q'$ of $Q$ is {\em extreme} if it contains an extreme point along each of the four diagonal directions. The set $Q_{\max}$ is an extreme subset. A point $q$ of $Q_{\max}$ is {\em redundant} if $Q_{\max}\setminus\{q\}$ is still an extreme subset.
For simplicity of discussion, we remove all redundant points from $Q_{\max}$.
For example, if $q_1$ and $q_2$ are both extreme points along the northeast direction (and $q_2$ is also an extreme point along the northwest direction), then $q_1$ is redundant and we simply remove $q_1$ from $Q_{\max}$ (and the new $q_1$ of $Q_{\max}$ now refers to the same physical point as $q_2$).

Consider a point $q\in Q_{\max}$. Without loss of generality, we assume
$q=q_3$ and the other cases can be analyzed similarly.
We will analyze the possible shapes of $C(q_3)$.
We assume $Q_{\max}$ has at least two distinct points since otherwise the problem
would be trivial.
We further assume $q_1\neq q_3$ since otherwise the analysis is much simpler.
According to their definitions,
$q_1$ must be above the line $l^-(q_3)$ (e.g., see
Fig.~\ref{fig:position}). However, $q_1$ can be either above or below
the line $l^+(q_3)$. In the following discussion,
we assume $q_1$ is below or on the line $l^+(q_3)$ and the case
where $q_1$ is above $l^+(q_3)$ can be analyzed similarly.
In this case $B(q_3,q_1)$ is a v-bisector (i.e., it has two vertical
half-lines).

\begin{figure}[t]
\begin{minipage}[t]{\linewidth}
\begin{center}
\includegraphics[totalheight=1.5in]{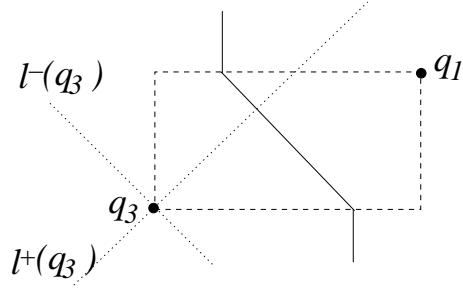}
\caption{\footnotesize Illustrating an example where $q_1$ is above
$l^-(q_3)$ and below or on $l^+(q_3)$. The bisector $B(q_1,q_3)$ is
a v-bisector (i.e., it has two vertical half-lines).}
\label{fig:position}
\end{center}
\end{minipage}
\end{figure}

We first introduce three {\em types} of regions (i.e., {\em type-A,
type-B}, and {\em type-C}), and we will show later that $C(q_3)$
must belong to one of the types.
Each type of region is bounded from
the left or below by a polygonal curve $\partial$ consisting of two
half-lines and a line segment of slope $\pm 1$ in between
(the line segment may be degenerated into a point).

\begin{figure}[t]
\begin{minipage}[t]{\linewidth}
\begin{center}
\includegraphics[totalheight=1.5in]{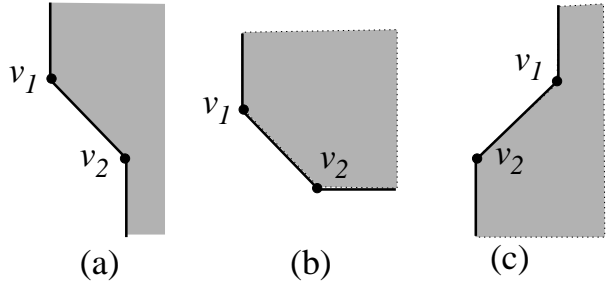}
\caption{\footnotesize Illustrating three types of regions (shaded).}
\label{fig:regiontypes}
\end{center}
\end{minipage}
\end{figure}

\begin{enumerate}
\item
From top to bottom,
the polygonal curve $\partial$ consists of a vertical half-line
followed by a line segment of slope $-1$ and then followed by a
vertical
half-line extended downwards (e.g., see Fig.~\ref{fig:regiontypes}
(a)). The
region on the right of $\partial$ is defined as a {\em type-A}
region.

\item
From top to bottom,
the polygonal curve $\partial$ consists of a vertical half-line
followed by a line segment of slope $-1$ and then followed by a
horizontal
half-line extended rightwards (e.g., see Fig.~\ref{fig:regiontypes}
(b)). The
region on the right of and above
$\partial$ is defined as a {\em type-B}
region.

\item
From top to bottom, the polygonal curve $\partial$ consists of a
vertical half-line followed by a line segment of slope $1$ and then
followed by a vertical half-line extended downwards (e.g., see
Fig.~\ref{fig:regiontypes} (c)).
The region on the right of $\partial$ is
defined as a {\em type-C} region.
\end{enumerate}

In each type of the regions,
the line segment of
$\partial$ is called the {\em middle segment}. Denote by $v_1$ the
upper endpoint of the middle segment and by $v_2$ the lower endpoint
(e.g., see  Fig.~\ref{fig:regiontypes}).  Again, the middle
segment may be degenerated to a point.
The following lemma shows that $C(q_3)$ must belong to one of the
three types of regions.

\begin{figure}[h]
\begin{minipage}[t]{\linewidth}
\begin{center}
\includegraphics[totalheight=1.5in]{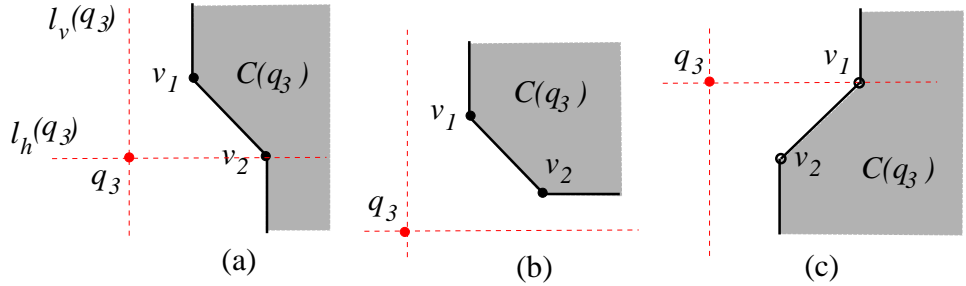}
\caption{\footnotesize Illustrating the three possible cases for
$C(q_3)$:
(a) a type-A region;
(b) a type-B region;
(c) a type-C region.}
\label{fig:cellpos}
\end{center}
\end{minipage}
\end{figure}

\begin{lemma}\label{lem:30}
The cell $C(q_3)$ must be one of the three types of regions.
Further (e.g., see Fig.~\ref{fig:cellpos}),
if $C(q_3)$ is a type-A region, then $C(q_3)$ is to the
right of $l_v(q_3)$ and $v_2$ is on $l_h(q_3)$; if $C(q_3)$ is a
type-B region, then
$C(q_3)$ is to the right of $l_v(q_3)$ and above $l_h(q_3)$; if
$C(q_3)$ a type-C region, then $C(q_3)$ is to the
right of $l_v(q_3)$ and $v_1$ is on $l_h(q_3)$.
\end{lemma}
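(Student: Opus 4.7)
The plan is to write $C(q_3) = r(q_3, q_1) \cap r(q_3, q_2) \cap r(q_3, q_4)$ and analyze the shape of each factor separately before combining them. First I would pin down the locations of the other three extreme points. Since $q_2$ is the NW-extreme of $Q$ while $q_3$ is the SW-extreme, $q_2$ must lie strictly above both $l^+(q_3)$ and $l^-(q_3)$, so the rectangle $R(q_3, q_2)$ is taller than wide and $B(q_3, q_2)$ is an h-bisector. By an analogous argument $q_4$ lies strictly between $l^-(q_3)$ and $l^+(q_3)$ to the east of $q_3$, and by the standing assumption $q_1$ does so as well; hence both $B(q_3, q_1)$ and $B(q_3, q_4)$ are v-bisectors. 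Each v-bisector sits entirely at $x$-coordinates $\geq (x_3 + x_i)/2 > x_3$, so the intersection $r(q_3, q_1) \cap r(q_3, q_4)$ already lies strictly to the east of $l_v(q_3)$, which establishes the claimed position of $C(q_3)$ relative to $l_v(q_3)$.

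Next I would describe each of the three regions explicitly. A direct inspection of the rectangle $R(q_3, q_1)$ shows that the middle segment of $B(q_3, q_1)$ has slope $-1$ when $q_1$ lies above $l_h(q_3)$ and slope $+1$ when it lies below; in either case the endpoint of the middle segment that sits on the edge of $R(q_3, q_1)$ through $q_3$ lies exactly on $l_h(q_3)$, because $q_3$ is a corner of that edge. The same dichotomy applies to $B(q_3, q_4)$. For $B(q_3, q_2)$, the symmetric statement holds with the roles of horizontal and vertical swapped: the two horizontal half-lines both have $y$-coordinates strictly greater than $y_3$, and the endpoints of the middle segment on the edge through $q_3$ lie on $l_v(q_3)$.

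Finally I would intersect the three regions. Taking the pointwise maximum of the left boundaries of $r(q_3, q_1)$ and $r(q_3, q_4)$, a short case distinction on whether each of $q_1, q_4$ lies above or below $l_h(q_3)$ shows that the boundary of $r(q_3, q_1) \cap r(q_3, q_4)$ is always a polygonal curve matching either the type-A template (middle-segment slope $-1$, with the lower endpoint inherited on $l_h(q_3)$) or the type-C template (middle-segment slope $+1$, with the upper endpoint on $l_h(q_3)$); in the two "mixed" combinations one of the two bisectors dominates in the whole half-plane of interest, and the type of the surviving bisector is kept. Intersecting further with $r(q_3, q_2)$ can only clip the region from below, since the h-bisector's horizontal half-lines lie above $l_h(q_3)$; either the clip is trivial and the type-A/C shape (with its distinguished $v$-point still on $l_h(q_3)$) is preserved, or the left horizontal half-line of $B(q_3, q_2)$ slices off the lower vertical half-line of a type-A region and converts it into a type-B region lying entirely above $l_h(q_3)$. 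The main obstacle is bookkeeping: in every sub-case one must simultaneously track which of $B(q_3, q_1), B(q_3, q_4)$ dominates at height $y_3$ and whether the h-bisector of $q_3, q_2$ is ``high enough'' to cut in, and verify that the resulting three polygonal pieces glue into exactly one of the three templates with the stated incidence on $l_h(q_3)$.
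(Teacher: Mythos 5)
Your overall strategy is the same as the paper's: write $C(q_3)$ as the intersection of the regions $r(q_3,q_i)$, locate $q_1,q_2,q_4$ in the wedges determined by $l^+(q_3)$ and $l^-(q_3)$ so as to classify each bisector as a v-bisector or an h-bisector, and then combine. (The paper combines incrementally in the order $q_1,q_2,q_4$ and splits on $y(q_1)\gtrless y(q_3)$; you merge the two v-bisectors first and clip with the h-bisector last, which is a harmless reordering.) Your analysis of the individual bisectors is essentially right, except for one false intermediate claim: a v-bisector $B(q_3,q_i)$ does \emph{not} lie entirely at $x\geq (x(q_3)+x(q_i))/2$; for $q_3=(0,0)$, $q_1=(10,8)$ the upper vertical half-line sits at $x=1<5$. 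The correct and still sufficient statement is that the bisector lies at $x\geq x(q_3)$ (strictly, when $q_i$ is strictly inside the eastern wedge), because $R(q_3,q_i)$ is wider than tall, so the middle segment's horizontal extent, centered at the midpoint, stays inside $[x(q_3),x(q_i)]$.

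The genuine gap is in the step combining $r(q_3,q_1)$ with $r(q_3,q_4)$. You assert that in the two mixed sub-cases ``one of the two bisectors dominates in the whole half-plane of interest.'' This is false: take $q_3=(0,0)$, $q_1=(10,8)$, $q_4=(12,-3)$, a legal extreme configuration. The left boundary of $r(q_3,q_1)\cap r(q_3,q_4)$ then consists of the upper vertical half-line of $B(q_3,q_4)$ at $x=7.5$, a piece of $B_M(q_3,q_1)$ from $(7.5,1.5)$ to $(9,0)$, and the lower vertical half-line of $B(q_3,q_1)$ at $x=9$; neither region contains the other (e.g., $(5,100)$ lies in $r(q_3,q_1)\setminus r(q_3,q_4)$ and $(6,-100)$ in $r(q_3,q_4)\setminus r(q_3,q_1)$). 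The intersection is still type-A here, but proving that it always matches one of the templates --- in particular, ruling out a ``U-shaped'' left boundary whose top piece comes from $B(q_3,q_1)$ and whose bottom piece comes from $B(q_3,q_4)$ --- requires the pairwise extremality relations between $q_1$ and $q_4$ themselves: NE-extremality of $q_1$ gives $x(q_1)+y(q_1)\geq x(q_4)+y(q_4)$, ordering the lower vertical half-lines, and SE-extremality of $q_4$ gives $x(q_4)-y(q_4)\geq x(q_1)-y(q_1)$, ordering the upper ones. Your argument uses only the positions of $q_1,q_2,q_4$ relative to $q_3$ and never these inter-point constraints, which are exactly what the paper's proof supplies via claims such as ``$q_4$ must be below $l^-(q_1)$.'' The same omission affects the final clip by $B(q_3,q_2)$: one must show that only the \emph{right} horizontal half-line of $B(q_3,q_2)$ (not the left one, as you wrote, and not its middle segment) can meet the region, and when $q_2$ lies northeast of $q_3$ this again needs $q_2$ to be below $l^-(q_1)$. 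Note also that when $q_2$ is a distinct non-redundant point the clip is never trivial, since the region before clipping is unbounded below while the right half-line of $B(q_3,q_2)$ sits at a fixed height above $l_h(q_3)$; the type-A and type-C outcomes in the lemma occur only in the degenerate cases where some of $q_1,q_2,q_4$ coincide, which your write-up does not separately address.
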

\begin{proof}
For any point $q$ in the plane, we use $y(q)$ to denote the
$y$-coordinate of $q$ and use $x(q)$ to denote the
$x$-coordinate of $q$.

The proof is essentially an incremental approach to construct the cell
$C(q_3)$.
We first discuss the case where $y(q_1)\geq y(q_3)$ (e.g., see Fig.~\ref{fig:cell1}).
Consider the bisector $B(q_1,q_3)$, which is a v-bisector in this
case (i.e., the two half-lines of $B(q_1,q_3)$ are vertical).

First of all, if $q_1$ and $q_3$ are the only distinct points of
$Q_{\max}$, then $C(q_3)$ is $r(q_3,q_1)$ and thus $C(q_3)$ is
a type-A region. Further, $C(q_3)$ is to the right of $l_v(q_3)$ and
$v_2$ is on $l_h(q_3)$. The lemma thus follows.
Below, we assume
$q_2$ is also distinct and the case where $q_4$ is distinct is similar.

\begin{figure}[t]
\begin{minipage}[t]{0.48\linewidth}
\begin{center}
\includegraphics[totalheight=1.7in]{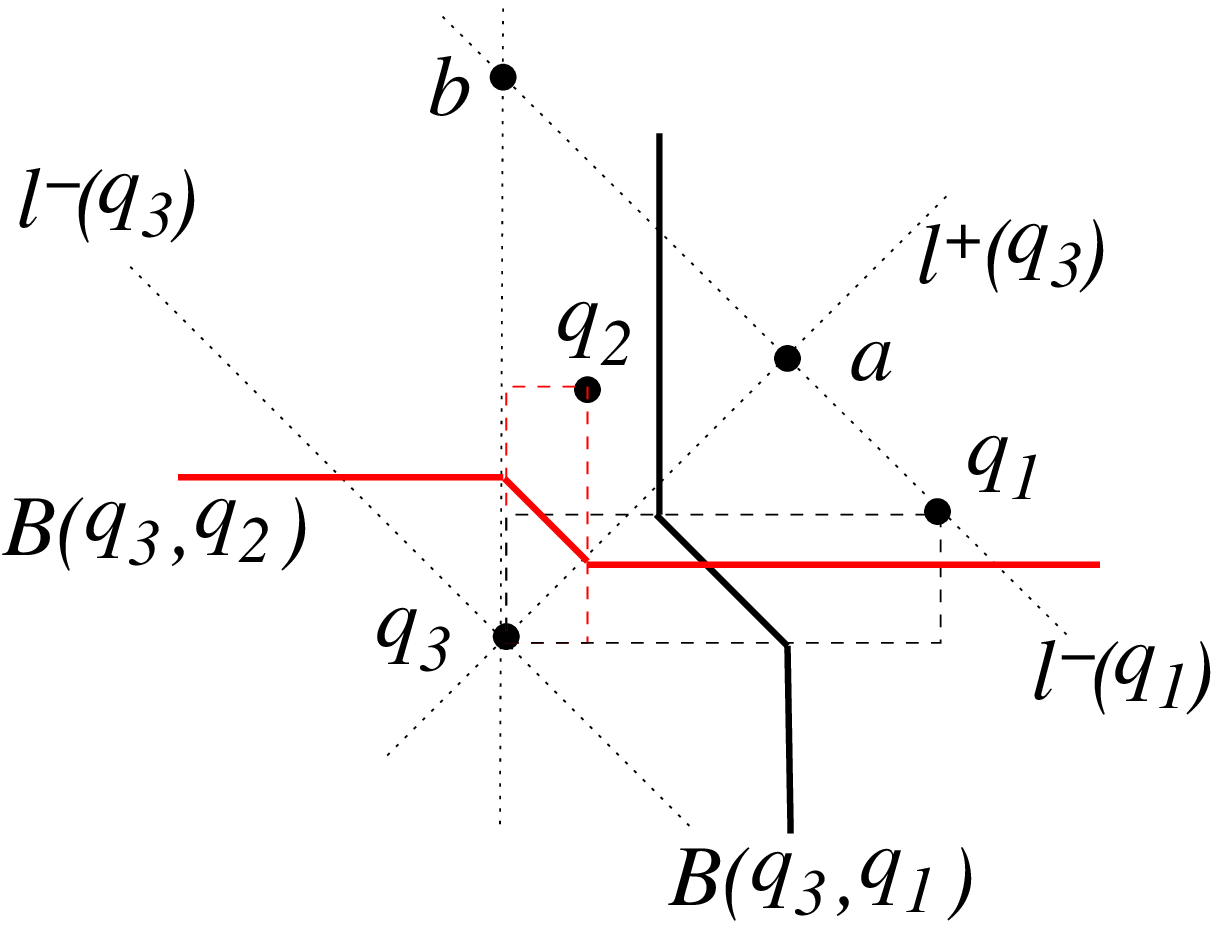}
\caption{\footnotesize Illustrating the bisector $B(q_3,q_1)$ and
the intersection $r(q_3,q_1)\cap r(q_3,q_2)$.}
\label{fig:cell1}
\end{center}
\end{minipage}
\hspace*{0.05in}
\begin{minipage}[t]{0.48\linewidth}
\begin{center}
\includegraphics[totalheight=1.7in]{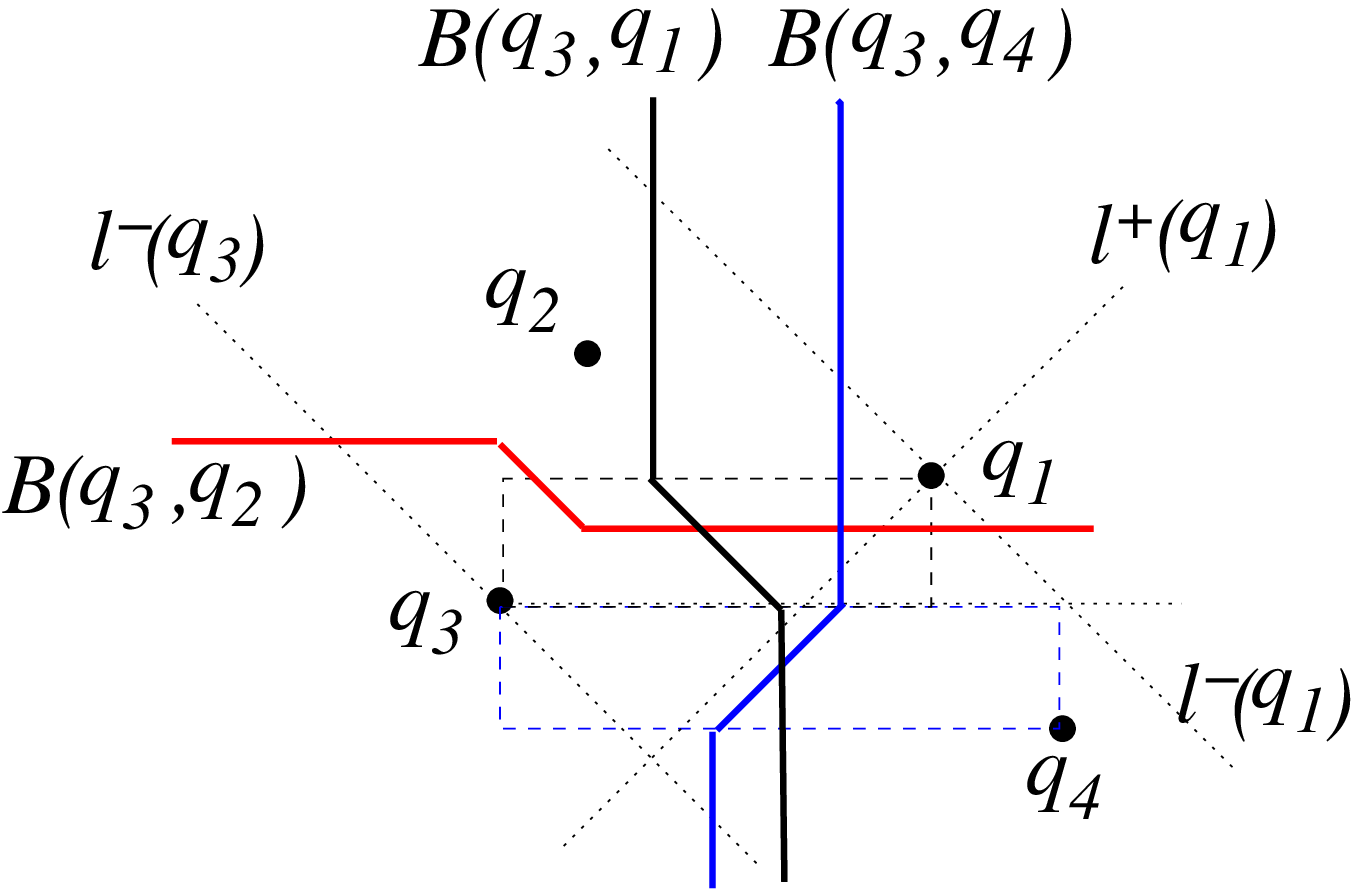}
\caption{\footnotesize Illustrating the bisector $B(q_3,q_4)$ and
the intersection $r(q_3,q_1)\cap r(q_3,q_2)\cap r(q_3,q_4)$.}
\label{fig:cell20}
\end{center}
\end{minipage}
\end{figure}

According to their definitions, $q_2$ must
be above $l^-(q_3)$, below $l^-(q_1)$, and above $l^+(q_3)$
(e.g., see Fig.~\ref{fig:cell1}); note that $q_2$ cannot be on any of
the
above three lines since otherwise $Q_{\max}$ would have a redundant point.
We analyze the shape of the intersection $r(q_3,q_1)\cap r(q_3,q_2)$.
Let $a$ be the intersection of $l^+(q_3)$ and $l^-(q_1)$.
Let $b$ be the intersection of $l_v(q_3)$ and $l^-(q_1)$.
Depending on whether $q_2$ is in the triangle $\triangle abq_3$,
there are two cases.

\begin{itemize}
\item
If $q_2\not\in \triangle abq_3$, then $x(q_2)\leq x(q_3)$. Since $q_2$ is above the line
$l^-(q_3)$, the bisector $B(q_3,q_2)$ is an h-bisector.
Since the rectangle $R(q_3,q_2)$ is on the left of $l_v(q_3)$ and
the bisector $B(q_3,q_1)$ is to the right of $l_v(q_3)$,
only the right half-line of $B(q_3,q_2)$ intersects $B(q_3,q_1)$
at a point either on the upper half-line or on the middle segment of
$B(q_3,q_1)$.
In either case, the intersection $r(q_3,q_1)\cap r(q_3,q_2)$ is a type-B
region that is above $l_h(q_3)$ and to the right of $l_v(q_3)$.

\item
If $q_3$ is strictly inside $\triangle abq_3$ (e.g., see Fig.~\ref{fig:cell1}),
then the bisector $B(q_3,q_2)$ is an h-bisector and its middle segment
$B_M(q_3,q_2)$ is of slope
$-1$. We claim that the line containing $B_M(q_3,q_2)$ is to the left of the
line containing $B_M(q_3,q_1)$. This can be proved by basic geometric
techniques, as follows.

Since $q_2$ is in the interior
of $\triangle abq_3$, we extend $\overline{q_3q_2}$ until it hits a
point on the segment $\overline{ab}$ and let $q'$ be the above point.
Since $q'$ is on $l^-(q_1)$,
the middle segment $B_M(q_3,q')$ is exactly on the
line containing $B_M(q_3,q_1)$. Since $x(q_2)< x(q')$, the claim follows.

The claim implies that the middle segment $B_M(q_3,q_2)$
does not intersect $B(q_3,q_1)$. Since the left horizontal half-line
of $B(q_3,q_2)$ is on the left of $l_v(q_3)$, it does not intersect
$B(q_3,q_1)$ either. Hence, only the right horizontal half-line of
$B(q_3,q_2)$ intersects $B(q_3,q_1)$, again at a point on the
upper half-line or the middle segment of $B(q_3,q_1)$.
Therefore, the intersection $r(q_3,q_1)\cap r(q_3,q_2)$ is a type-B
region, which is above $l_h(q_3)$ and to the right of $l_v(q_3)$.
\end{itemize}

In summary,
the intersection $r(q_3,q_1)\cap r(q_3,q_2)$ is a type-B
region that is above $l_h(q_3)$ and to the right of $l_v(q_3)$.
If there is no such a distinct point $q_4$, we are done with
proving the lemma. In the following, we assume there is a distinct
point $p_4$. Hence, the cell $C(q_3)$ is
$r(q_3,q_1)\cap r(q_3,q_2)\cap r(q_3,q_4)$.

According to their definitions, $q_4$ must be below $l^-(q_1)$, above
$l^-(q_3)$, and below $l^+(q_1)$ (e.g., see Fig.~\ref{fig:cell20}).
Note that the bisector $B(q_3,q_4)$ must be a v-bisector.
Depending on whether $y(q_4)\leq y(q_3)$, there are two cases.

\begin{itemize}
\item
If $y(q_4)\leq y(q_3)$ (e.g., see Fig.~\ref{fig:cell20}),
then each point of the rectangle $R(q_3,q_4)$ is
below or on the line $l_h(q_3)$. Hence, only the upper vertical line
of $B(q_3,q_4)$ is possible to intersect $r(q_3,q_1)\cap r(q_3,q_2)$.
Recall that $r(q_3,q_1)\cap
r(q_3,q_2)$ is a type-B region. If the vertical half-line of $B(q_3,q_4)$ intersects
$r(q_3,q_1)\cap r(q_3,q_2)$, then the cell $C(q_3)$, which is
$r(q_3,q_1)\cap r(q_3,q_2)\cap r(q_3,q_4)$, is a type-B region,
otherwise $C(q_3)=r(q_3,q_1)\cap r(q_3,q_2)$ is also a type-B region.
In either case, $C(q_3)$ is above $l_h(q_3)$ and to the right of $l_v(q_3)$.
The lemma thus follows.

\item
If $y(q_4)>y(q_3)$, then $q_4$ is in the triangle formed by
the three lines $l_h(q_3)$, $l^-(q_1)$, and $l^+(q_1)$.
The middle segment of $B(q_3,q_4)$ is of slope $-1$.
We claim that the middle segment $B_M(q_3,q_4)$
must be in the rectangle $R(q_3,q_1)$ and is to the left of the middle
segment $B_M(q_3,q_1)$. Indeed, let $d$ be the intersection of
$l_h(q_3)$ and $l^-(q_1)$. Let $e$ be the lower endpoint of
$B_M(q_3,q_1)$. By the definition of the middle segments, $e$ is the
midpoint of $\overline{q_3d}$. Since the lower edge of $R(q_3,q_4)$ is
contained in $\overline{q_3d}$ and $q_4$ is below $l^-(q_1)$,
we can obtain the claim above.

The claim implies that neither the middle segment nor the lower
vertical half-line of $B(q_3,q_4)$ can intersect $B(q_3,q_1)$.
Therefore, only the upper vertical half-line of $B(q_3,q_4)$
can intersect $r(q_3,q_1)\cap
r(q_3,q_2)$.
Therefore, as in the first case, $C(q_3)$ is a type-B regions that is
above $l_h(q_3)$ and to the right of $l_v(q_3)$.  The lemma thus follows.
\end{itemize}

We have proved the lemma for the case where $y(q_1)\geq y(q_3)$.
Next, we consider the case where $y(q_1)< y(q_3)$ (e.g., see Fig.~\ref{fig:cell40}).
The analysis is similar and we briefly discuss it below.

In this case, the middle segment $B_M(q_3,q_1)$ is of
slope $1$. If there is no other distinct point in $Q_{\max}$, then
$C(q_3)=r(q_3,q_1)$ and $C(q_3)$ is a type-C region that is to the
right of $l_v(q_3)$ and $v_1$ (i.e., the upper endpoint of the middle
segment) is on $l_h(v_3)$, which proves the lemma. Below, we assume $q_2$
is another distinct point and the case for $q_4$ is similar.

\begin{figure}[t]
\begin{minipage}[t]{0.49\linewidth}
\begin{center}
\includegraphics[totalheight=2.0in]{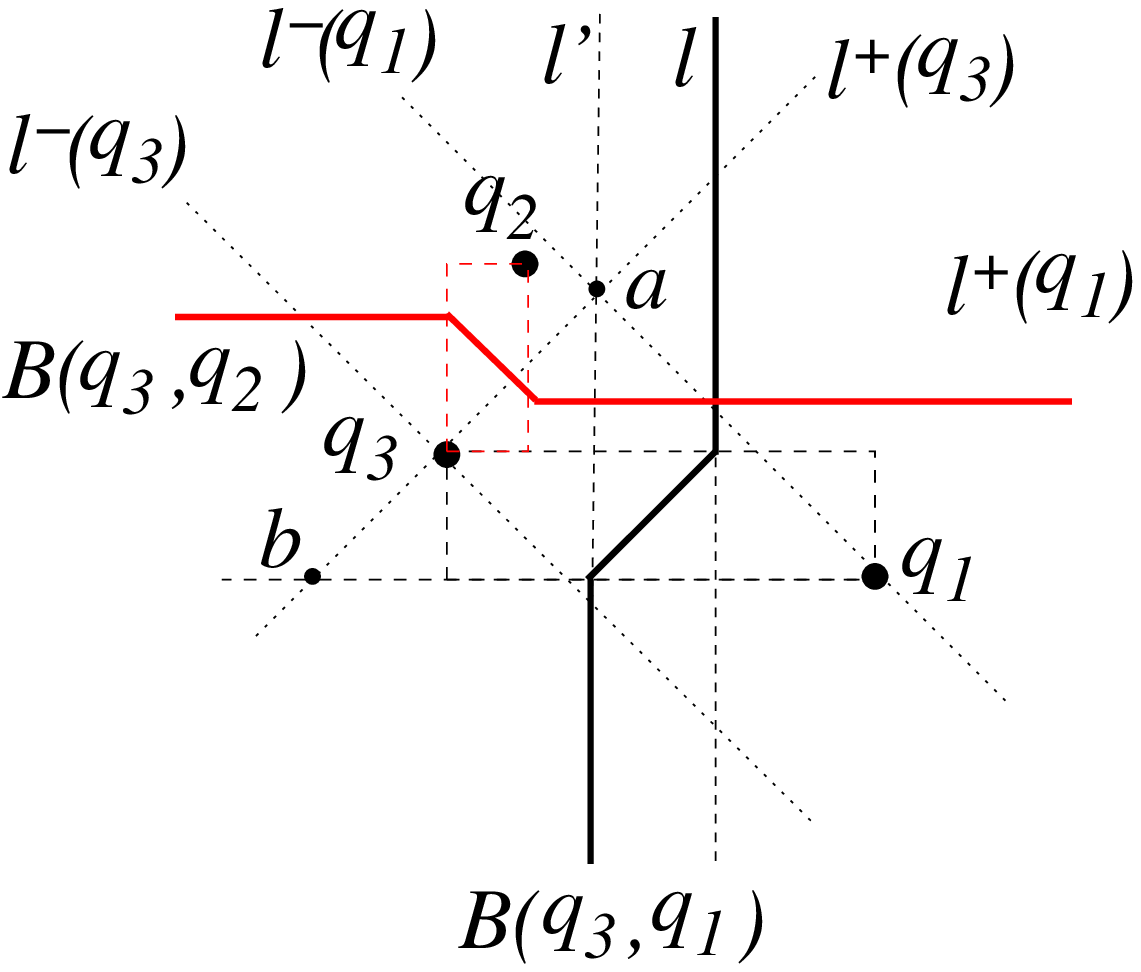}
\caption{\footnotesize Illustrating the case where $y(q_1)< y(q_3)$
and $q_2$ is shown.}
\label{fig:cell40}
\end{center}
\end{minipage}
\hspace*{-0.05in}
\begin{minipage}[t]{0.49\linewidth}
\begin{center}
\includegraphics[totalheight=2.0in]{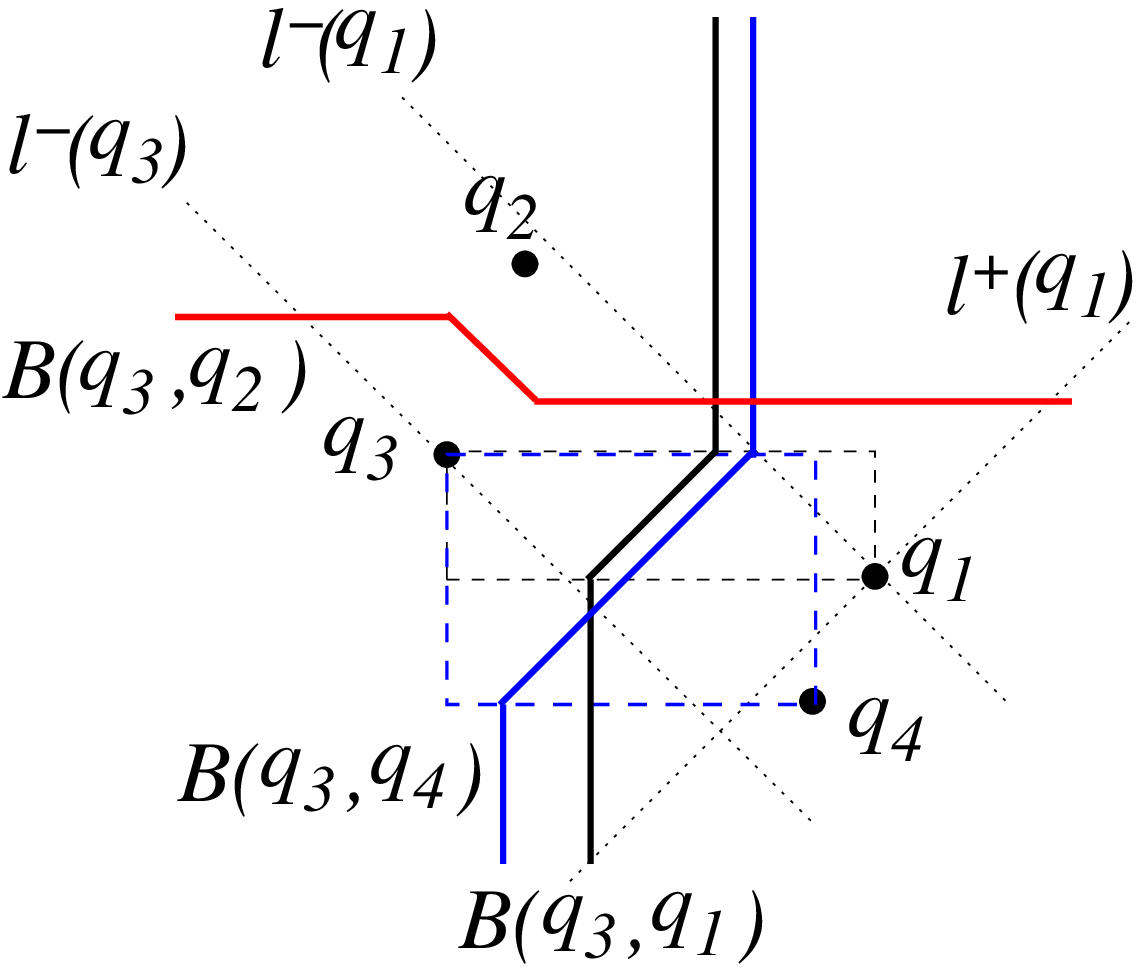}
\caption{\footnotesize Illustrating the case where $y(q_1)< y(q_3)$
and both $q_2$ and $q_4$ are shown.}
\label{fig:cell30}
\end{center}
\end{minipage}
\end{figure}

Again, $q_2$ must be above $l^-(q_3)$, below $l^-(q_1)$, and above
$l^+(q_3)$.
The bisector $B(q_3,q_2)$ is an h-bisector (e.g., see
Fig.~\ref{fig:cell40}).
Let $l$ be the vertical line
containing the upper half-line of $B(q_3,q_1)$.
We claim that $q_2$ must be to the left of $l$.
To prove the claim, it is sufficient to
show that the point $a$ is to the left of $l$,
where $a$ is the intersection of $l^-(q_1)$ and $l^+(q_3)$. To this
end, we first show that $a$ must be on $l'$, where $l'$ is the
vertical line containing the lower
vertical half-line of $B(q_3,q_1)$. To see this, consider the triangle
$\triangle abq_1$ where $b$ is the intersection of $l_h(q_1)$ and
$l^+(q_3)$. According to the definition of the middle segment of
$B(q_3,q_1)$, the lower endpoint of
$B_M(q_3,q_1)$ is the midpoint of $\overline{bq_1}$. Further,
since $\overline{ab}$ is of slope $1$ and $\overline{aq_1}$ is of
slope $-1$, the angle $\angle baq_1=\pi/2$ and the Euclidean lengths of
$\overline{ab}$ and $\overline{aq_1}$ are the same. Therefore, $a$ is
on $l'$. Since
$B_M(q_3,q_1)$ is of slope $1$, $l'$ is to the left of $l$.
The claim is proved.

Since $q_2$ must be above $l_h(q_3)$,
the above claim implies that only the right horizontal half-line of
$B(q_3,q_2)$ intersects $B(q_3,q_1)$ and the
intersection is on the upper vertical line of $B(q_3,q_1)$.
Therefore, $r(q_3,q_2)\cap r(q_3,q_1)$ is a type-B region that is
above $l_h(q_3)$ and to the right of $l_v(q_3)$. In fact,
$r(q_3,q_2)\cap r(q_3,q_1)$ is a degenerate type-B region as its
boundary consists of a vertical half-line and a horizontal half-line.
If there is no such a distinct point $q_4$ in $Q_{\max}$, we are done
with proving the lemma. In the following, we assume $q_4$ is another distinct
point, and thus $C(q_3)=r(q_3,q_2)\cap
r(q_3,q_1)\cap r(q_3,q_4)$.

Again, $q_4$ must be below $l^+(q_1)$, above
$l^-(q_3)$, and below $l^-(q_1)$ (e.g., see Fig.~\ref{fig:cell30}).
The bisector $B(q_3,q_4)$ must be a v-bisector.
Since no point of the rectangle $R(q_3,q_4)$ is above the line
$l_h(q_3)$ and $r(q_3,q_2)\cap r(q_3,q_1)$ is above $l_h(q_3)$, only
the upper vertical line of $B(q_3,q_4)$ is possible to intersect
$r(q_3,q_2)\cap r(q_3,q_1)$. Regardless of whether the upper vertical
line of $B(q_3,q_4)$ intersects $r(q_3,q_2)\cap r(q_3,q_1)$,
$C(q_3)$ is always a (degenerate) type-B region that is above
$l_h(q_1)$ and to the right of $l_v(q_3)$.

The lemma is thus proved.
\end{proof}

\subsection{Answering the Queries}

Recall that our goal is to compute $p_3$, which
is the nearest point of $P\cap C(q_3)$ to $q_3$.
Based on Lemma \ref{lem:30}, we can compute the point $p_3$ in $O(\log
n)$ time by making use of the segment dragging queries
\cite{ref:ChazelleAn88,ref:MitchellL192}. The details are given in
Lemma \ref{lem:40}.

\begin{lemma}\label{lem:40}
After $O(n\log n)$ time and $O(n)$ space preprocessing on $P$,
the point $p_3$ can be found in $O(\log n)$ time.
\end{lemma}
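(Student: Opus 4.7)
The plan is to reduce the task of finding $p_3$ to a constant number of segment dragging queries on $P$. Recall that segment dragging queries \cite{ref:ChazelleAn88,ref:MitchellL192} allow, after $O(n\log n)$-time and $O(n)$-space preprocessing of a planar point set, to find in $O(\log n)$ time the first point of $P$ hit when an axis-parallel or slope-$\pm 1$ segment is translated perpendicular to itself.

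By Lemma \ref{lem:30}, $C(q_3)$ has one of three shapes (type-A, type-B, or type-C), each bounded on the left by two axis-parallel half-lines and a middle segment of slope $\pm 1$, with $C(q_3)$ lying to the right of $l_v(q_3)$ and with one of the middle-segment endpoints $v_1,v_2$ on $l_h(q_3)$. I would first split $C(q_3)$ along $l_h(q_3)$, because for any $p\in C(q_3)$ the $L_1$ distance to $q_3$ is $(x(p)-x(q_3)) + |y(p)-y(q_3)|$, and the sign of $y(p)-y(q_3)$ determines whether finding the $L_1$ nearest point reduces to minimizing $x+y$ or $x-y$.

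After this split, each of the two parts is bounded by a constant number of axis-parallel segments/half-lines together with at most one slope-$\pm 1$ segment (the relevant portion of the middle segment). Minimizing a linear function $x\pm y$ over such a region is exactly what segment dragging solves: I drag the appropriate slope-$\mp 1$ edge of the $L_1$ ball centered at $q_3$ outward, using the boundary pieces of the region to bound the dragging segment. Concretely, I further dissect each of the upper and lower parts into $O(1)$ canonical sub-regions (each bounded by an axis-parallel and/or a slope-$\pm 1$ segment) so that a single segment dragging query returns the $L_1$ nearest point of $P$ within that sub-region; the overall answer $p_3$ is the best among these $O(1)$ candidates. The type of $C(q_3)$, and hence the decomposition used, is identified in $O(1)$ time from $Q_{\max}$ once the farthest Voronoi diagram has been built.

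Preprocessing the segment dragging structure on $P$ costs $O(n\log n)$ time and $O(n)$ space, and each of the $O(1)$ queries takes $O(\log n)$ time, giving the claimed bounds. The main obstacle is to verify that the canonical sub-regions produced by the decomposition are each of a form accepted by the segment dragging structure of \cite{ref:ChazelleAn88,ref:MitchellL192}; this amounts to a short case analysis over the three cell types and the upper/lower halves, guided by the explicit geometric description of $C(q_3)$ in Lemma \ref{lem:30}.
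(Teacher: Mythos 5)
Your proposal follows essentially the same route as the paper: decompose $C(q_3)$ into $O(1)$ canonical subregions using horizontal lines through the middle-segment endpoints (equivalently, through $q_3$, since Lemma \ref{lem:30} places $v_1$ or $v_2$ on $l_h(q_3)$ and the relevant halves lie in single quadrants of $q_3$), and answer one segment-dragging query per subregion, taking the best of the $O(1)$ candidates. The paper's proof simply carries out the case analysis you defer, making explicit which primitive serves each piece: a parallel-track query for the slab swept by the middle segment and out-of-corner queries for the quadrant-shaped subregions apexed at $v_1$ and $v_2$.
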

\begin{proof}
Before giving the algorithm, we briefly discuss the {\em segment dragging
queries} that will be used by our algorithm.

\begin{figure}[t]
\begin{minipage}[t]{\linewidth}
\begin{center}
\includegraphics[totalheight=1.2in]{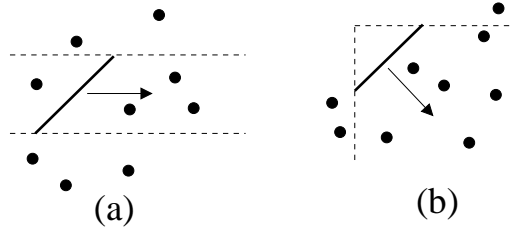}
\caption{\footnotesize Illustrating the segment dragging queries: (a)
a parallel-track query; (b) an out-of-corner query.}
\label{fig:segdrag}
\end{center}
\end{minipage}
\end{figure}

Given a set $S$ of $n$ points in the plane, we introduce two
types of segment dragging queries: the {\em parallel-track queries}
and the {\em out-of-corner queries} (e.g., Fig.~\ref{fig:segdrag}). For each
parallel-track query, we are given two parallel vertical or
horizontal lines (as ``tracks'') and a line segment
of slope $\pm 1$ with endpoints on the two tracks, and the
goal is to find the first point of $S$ hit by the segment if we drag the segment
along the two tracks. For each out-of-corner query, we are given
two axis-parallel tracks forming a perpendicular corner,
and the goal is to find the first point of
$S$ hit by dragging  out of the corner
a segment of slope $\pm 1$ with endpoints on the
two tracks.

For the parallel-track queries, as shown by Mitchell
\cite{ref:MitchellL192}, we can use Chazelle's approach
\cite{ref:ChazelleAn88} to answer each query in $O(\log n)$ time after
$O(n\log n)$ time and $O(n)$ space preprocessing on $S$. For the
our-of-corner dragging queries, by transforming it to a point
location problem, Mitchell \cite{ref:MitchellL192} gave
an algorithm that can answer each query in $O(\log n)$ time after
$O(n\log n)$ time and $O(n)$ space preprocessing on $S$.

In the sequel, we present our algorithm for the lemma by using the
above segment dragging queries. Our goal is to find $p_3$, which is
the closest point of $P\cap C(q_3)$ to $q_3$.
Depending on the type of the $C(q_3)$ as stated in Lemma \ref{lem:30},
there are three cases.

\begin{enumerate}
\item
If $C(q_3)$ is a type-A region, we further decompose $C(q_3)$ into
three subregions (e.g., see Fig.~\ref{fig:partition} (a))
by introducing two horizontal half-lines going
rightwards from $v_1$ and $v_2$ (i.e.,
the endpoints of the middle segment of the boundary of $C(q_3)$),
respectively. We call the three subregions
the {\em upper}, {\em middle}, and {\em lower} subregions,
respectively,  according to their heights. To find $p_3$, for each
subregion $C$, we compute the closest point of $P\cap C$ to $q_3$, and
$p_3$ is the closest point to $q_3$ among the three points found above.

\begin{figure}[t]
\begin{minipage}[t]{\linewidth}
\begin{center}
\includegraphics[totalheight=1.3in]{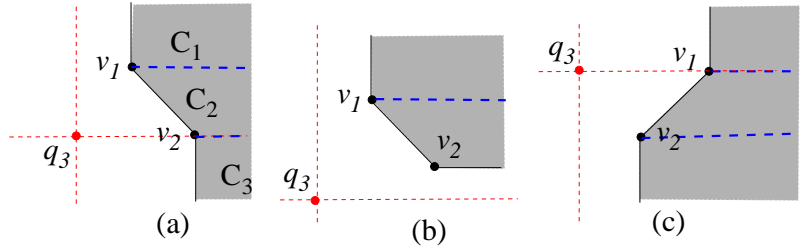}
\caption{\footnotesize Illustrating the decomposition of $C(q_3)$ for
segment-dragging queries.}
\label{fig:partition}
\end{center}
\end{minipage}
\end{figure}

For the upper subregion, denoted by $C_1$, according to Lemma
\ref{lem:30}, $C_1$ is in the first quadrant of $q_3$. Therefore,
$q_3$'s closest point in $P\cap C_1$ is exactly the answer of the
out-of-corner query by dragging a segment of slope $-1$ from the corner of $C_1$.

For the middle subregion, denoted by $C_2$, according to Lemma
\ref{lem:30}, $C_2$ is in the first quadrant of $q_3$. Therefore,
$q_3$'s closest point in $P\cap C_2$ is exactly the answer of the
parallel-track query by dragging the middle segment of the boundary of
$C(q_3)$ rightwards.

For the lower subregion, denoted by $C_3$, according to Lemma
\ref{lem:30}, $C_3$ is in the fourth quadrant of $q_3$. Therefore,
$q_3$'s closest point in $P\cap C_3$ is exactly the answer of the
out-of-corner query by dragging a segment of slope $1$ from the corner of $C_3$.

Therefore, in this case we can find $p_3$ in $O(\log n)$ time after
$O(n\log n)$ time and $O(n)$ space preprocessing on $P$.

\item
If $C(q_3)$ is a type-B region, we further decompose $C(q_3)$ into
two subregions (e.g., see Fig.~\ref{fig:partition} (b))
by introducing a horizontal half-line rightwards from $v_1$.
To find $p_3$, again, we find the closest
point to $q_3$ in each of the two sub-regions.

According to Lemma \ref{lem:30}, both subregions are in the first
quadrant of $q_3$. By using the same approach as the first case,
$q_3$'s closest point in the upper subregion can be found by an
out-of-corner query and
$q_3$'s closest point in the lower subregion can be found by a
parallel-track query.

\item
If $C(q_3)$ is a type-C region, the case is symmetric to the first
case and we can find $p_3$ by using two out-of-corner queries and a
parallel-track query.
\end{enumerate}

As a summary, we can find $p_3$ in $O(\log n)$ time after
$O(n\log n)$ time $O(n)$ space preprocessing on $P$. The lemma thus follows.
\end{proof}

By combining Lemmas \ref{lem:20} and \ref{lem:40},  we have
the following theorem.

\begin{theorem}\label{theo:10}
Given a set $P$ of $n$ points in the plane, after $O(n\log n)$ time
and
$O(n)$ space preprocessing, we can answer each $L_1$ \annmax\ query in
$O(m+\log n)$ time for any set $Q$ of $m$ query points.
\end{theorem}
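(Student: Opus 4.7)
The plan is to combine Lemmas \ref{lem:20} and \ref{lem:40} into a single preprocessing/query framework. For preprocessing, I would run the segment-dragging preprocessing of Lemma \ref{lem:40} on $P$ once. Since both the parallel-track and out-of-corner structures each need $O(n\log n)$ time and $O(n)$ space, and we need only a constant number of them (one per dragging direction/slope combination), the total preprocessing cost remains $O(n\log n)$ time and $O(n)$ space.

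Next, for the query algorithm, given a query set $Q$ of $m$ points, I would first compute $Q_{\max}=\{q_1,q_2,q_3,q_4\}$ by a single linear scan through $Q$, maintaining the running extreme point along each of the four diagonal directions; this costs $O(m)$ time. Then I would remove any redundant points from $Q_{\max}$ (as described in the paragraph before Lemma \ref{lem:30}) in $O(1)$ time, so that $|Q_{\max}|\le 4$ and each remaining $q_i$ is a distinct extreme direction representative. By Lemma \ref{lem:10}, $g(p,Q)$ on the full query set equals the maximum $L_1$ distance from $p$ to this small set, so I may henceforth reason with $Q_{\max}$ only.

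Now for each $q_i\in Q_{\max}$, I determine the shape of $C(q_i)$ in $\fvd(Q_{\max})$ in $O(1)$ time by directly computing the $O(1)$ bisectors and applying Lemma \ref{lem:30} (for $q_3$, and symmetric arguments for the other three directions) to classify $C(q_i)$ as type-A, type-B, or type-C and to identify the endpoints $v_1, v_2$ of its middle segment. Then I invoke the procedure of Lemma \ref{lem:40} (suitably reflected for the other three diagonal directions) to compute $p_i$, the nearest point of $P\cap C(q_i)$ to $q_i$, in $O(\log n)$ time per $q_i$. This yields at most four candidate points $p_i$ in $O(\log n)$ total time.

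Finally, by Lemma \ref{lem:20}, $\psi(Q)$ is the candidate $p_j$ minimizing $d(p_j,q_j)$, which I select in $O(1)$ time by inspecting all four candidates. The total query time is $O(m)+O(1)+O(\log n)=O(m+\log n)$, matching the claimed bound. I do not anticipate a serious obstacle here since the two prior lemmas already carry the technical weight; the only care needed is in (i) reflecting the Lemma \ref{lem:30}/Lemma \ref{lem:40} arguments to handle $q_1,q_2,q_4$ by symmetry, and (ii) ensuring that degeneracies (e.g., $C(q_i)$ being empty, or $Q_{\max}$ having fewer than four distinct points) are handled without affecting the time bounds, both of which are straightforward given the classification in Lemma \ref{lem:30}.
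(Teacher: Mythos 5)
Your proposal is correct and follows essentially the same route as the paper: preprocess $P$ once for the constant number of segment-dragging structures, compute $Q_{\max}$ in $O(m)$ time, build $\fvd(Q_{\max})$ and classify each cell in constant time via Lemma~\ref{lem:30}, obtain each $p_i$ in $O(\log n)$ time via Lemma~\ref{lem:40}, and select the answer by Lemma~\ref{lem:20}. The extra care you note about symmetry and degeneracies is implicit in the paper's treatment and does not change the argument.
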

\begin{proof}
As preprocessing, we build data structures for answering the segment
dragging queries on $P$ \cite{ref:ChazelleAn88,ref:MitchellL192}. The
preprocessing takes $O(n\log n)$ time and $O(n)$ space.

Given any query set $Q$, we first determine $Q_{\max}$ in $O(m)$ time.
Then, we compute the farthest Voronoi diagram $\fvd(Q)$ in constant
time, e.g., by the incremental approach given in this paper.
Then, for each $1\leq i\leq 4$, we compute the point $p_i$ by
Lemma \ref{lem:40} in $O(\log n)$ time. Finally, $\psi(Q)$ can be
determined by Lemma \ref{lem:20}.
\end{proof}

\subsection{The Top-$k$ Queries}

We extend our result in Theorem \ref{theo:10} to the top-$k$ queries.
All notations here follow those defined previously. Consider any value
$k$ with $1\leq k\leq n$.

For each point $q\in Q_{\max}$, e.g., $q=q_3$ as defined earlier, our
algorithm will find $k$ points from $P\cap C(q_3)$ nearest to $q_3$
in sorted order by their distances to $q_3$. If $|P\cap
C(q_3)|<k$, all points of $P\cap C(q_3)$ will be reported and no
other points will be reported. Due to $|Q_{\max}|\leq 4$, we will
obtain at most $4k$ points, and among them the $k$ points with the
smallest values $g(p,Q)$ are the sought points for the top-$k$ query,
which can be found in additional $O(k)$ time since the above $4k$
points are reported as four sorted lists by their values $g(p,Q)$.
The following lemma finds the $k$ points of $P\cap C(q_3)$ nearest to
$q_3$ and the algorithms for other points of $Q_{\max}$ are similar.

\begin{lemma}\label{lem:400}
After $O(n\log n)$ time and $O(n)$ space preprocessing on $P$,
the $k$ points of $P\cap C(q_3)$ nearest to $q_3$ can be found in
$O(k\log n)$ time and these points are reported in sorted order by
their distances to $q_3$.
\end{lemma}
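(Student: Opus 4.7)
The plan is to extend Lemma \ref{lem:40} from top-$1$ to top-$k$ via a best-first (heap-based) enumeration inside each of the at most three subregions into which $C(q_3)$ is decomposed, followed by a merge across these subregions. Inside any such subregion the $L_1$ distance from a point $p=(x,y)$ to $q_3$ equals a fixed linear function $\alpha x+\beta y+\gamma$ with $\alpha,\beta\in\{-1,+1\}$, so the $k$ nearest points of $P$ in a subregion to $q_3$ are the $k$ points of $P$ in it with the smallest values of this linear function. Moreover, the segment-dragging primitives used in Lemma \ref{lem:40} already support, in $O(\log n)$ time and $O(n)$ space, finding the minimizer of this linear function among the points of $P$ inside any region from a small catalog of shapes: an orthogonal quadrant of $q_3$ (out-of-corner dragging), a horizontal strip bounded on the left by a slope-$\pm 1$ segment (parallel-track dragging with horizontal tracks), and a vertically-bounded 3-sided region $\{a\le x\le a',\,y\ge b\}$ (parallel-track dragging with vertical tracks). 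No additional preprocessing beyond that of Lemma \ref{lem:40} is required.

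Within a subregion $R$, I enumerate the $k$ nearest points of $P\cap R$ in sorted order by maintaining a priority queue of catalog subregions, each annotated with its minimizer. Initialize the queue with $(R,p^*_R)$, obtained by one dragging query. Each iteration extracts the pair whose stored minimizer has smallest distance to $q_3$, outputs that minimizer $p^*$, splits the extracted subregion into two catalog pieces whose union covers the subregion minus $\{p^*\}$, computes the minimizer of each piece by one further dragging query initialized on the slope-$\pm 1$ line through $p^*$, and inserts the two new pairs into the queue. A quadrant is split by the vertical line $x=x(p^*)$ into one sub-quadrant and one vertically-bounded 3-sided region; a horizontal strip is split by the horizontal line $y=y(p^*)$ into two narrower strips; and a vertically-bounded 3-sided region is split by the vertical line $x=x(p^*)$ into two smaller vertically-bounded 3-sided regions. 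In every case the two children belong to the catalog. After $k$ extractions the outputs are precisely the $k$ points of $P\cap R$ nearest to $q_3$ in sorted order. Each iteration does $O(1)$ dragging queries at $O(\log n)$ each and $O(1)$ heap operations at $O(\log k)$ each, giving $O(k\log n)$ time per subregion and at most $O(k)$ extra space. To produce the top-$k$ of $P\cap C(q_3)$ itself, I run the enumerations in the at most three subregions in parallel and merge their sorted streams with a small heap in an additional $O(k)$ time.

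The main subtlety is to verify that the dragging query on each child, initialized at the slope-$\pm 1$ line through the extracted $p^*$, correctly returns the next minimizer inside the child. Because $p^*$ minimizes the linear function in the parent, every remaining point of the parent has a strictly larger value of this linear function (under general position, enforceable by standard symbolic tie-breaking during preprocessing), so the initial segment lies on the boundary of the child in the correct direction, and the first point subsequently hit as the segment is dragged is exactly the minimizer of the linear function in the child. Consequently the lemma follows with preprocessing $O(n\log n)$ time, space $O(n)$, and query time $O(k\log n)$.
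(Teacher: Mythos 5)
Your proposal is correct and follows essentially the same route as the paper: a best-first enumeration that maintains a min-heap of candidates over the three subregions of Lemma \ref{lem:40}, refills the heap after each extraction by splitting the exhausted region into $O(1)$ pieces from a small catalog of shapes answerable by one segment-dragging query each, and uses a general-position/perturbation device (the paper's explicit $\delta$-shift and point-location face selection) to keep a query from returning the point just extracted. The only differences are bookkeeping --- you use binary splits and per-subregion heaps merged at the end, whereas the paper uses one global candidate heap, a ternary split for the out-of-corner regions, and simply continues the rightward drag for the middle strip --- and none of these affects correctness or the $O(k\log n)$ bound.
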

\begin{proof}
We assume $|P\cap C(q_3)|\geq k$ since the case $|P\cap C(q_3)|< k$
can be easily solved. For ease of exposition, we also make a general
position assumption that no two points of $P$ lie on the same line of
slope $1$ or $-1$, and our approach can be generalized to
handle the general case.

We follow the discussion in the proof of Lemma \ref{lem:40}. As
preprocessing, we build the segment-dragging query data structures
\cite{ref:ChazelleAn88,ref:MitchellL192}, which takes $O(n\log n)$
time and $O(n)$ space.  Recall
that the shape of the cell $C(q_3)$ has three types. We assume
$C(q_3)$ is a type-A and the other two types can be handled
analogously.
Let $p'_1,p_2',\ldots,p_k'$ be
the $k$ points of $P\cap C(q_3)$ nearest to $q_3$
in the increasing order by their distances to $q_3$, and our algorithm
will report them in this order.

Recall that to find $p_1'$, we partition $C(q_3)$ into three
subregions $C_1$, $C_2$, and $C_3$, and for each subregion $C$, we
find the nearest point of $P\cap C$ to $q_3$; we call the above point the {\em
candidate point} for $p_1'$. Let $H$ denote the set of the above
three candidate points. The point of $H$ nearest to $q_3$ is $p_1'$.
Below we discuss how to find $p_2'$.

We first remove $p_1'$ from $H$. Next, we will find
three new candidate points and insert them to $H$, such that $p_2'$
is the nearest point of $H$ to $q_3$. The details are given below.
Depending on which subregion of $C(q_3)$ the point
$p_1'$ belongs to, there are three
cases.

\begin{enumerate}
\item If $p_1'\in C_1$, let $p$ be the second nearest point of $P\cap
C_1$ to $q_3$. It is easy to see that $p_2'$ must be one of the points
in $H\cup \{p\}$. Recall that $p_1'$ is found by dragging a segment
$s$ of slope $-1$ out of the corner of $C_1$ (i.e., $v_1$, see
Fig.~\ref{fig:partition}). After $s$
hits $p_1'$, if we keep dragging $s$, $p$ is the next point that will
be hit by $s$. To find $p$, unfortunately we cannot use the same out-of-corner
segment dragging query data structure \cite{ref:MitchellL192}
because the data structure only works
when the triangle formed by $v_1$ and $s$ does
not contain any point in its interior (see \cite{ref:MitchellL192} for
more details on this). Instead, we use the following approach.

\begin{figure}[t]
\begin{minipage}[t]{\linewidth}
\begin{center}
\includegraphics[totalheight=2.0in]{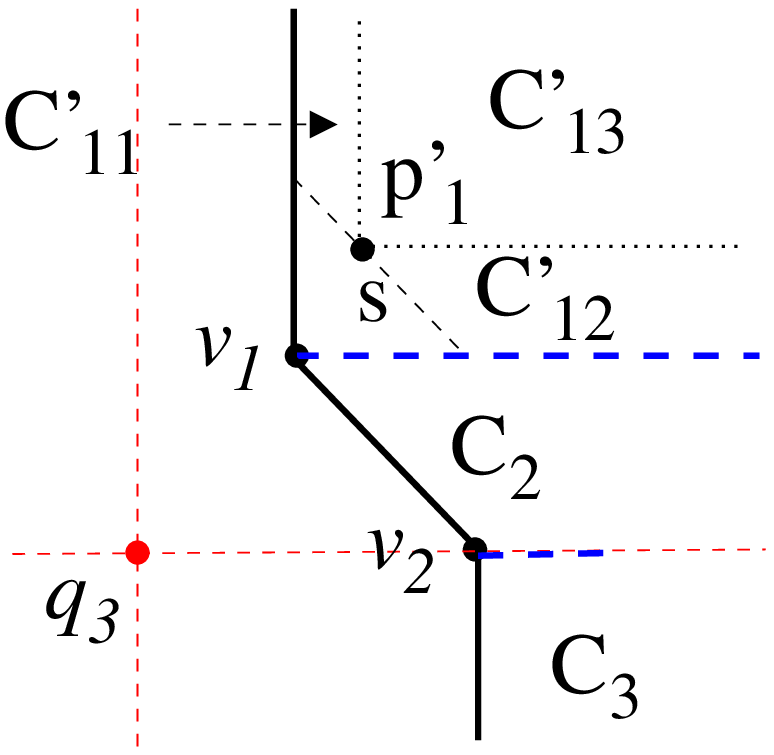}
\caption{\footnotesize Partitioning $C_1'$ into three regions:
$C'_{11}$, $C'_{12}$, and $C'_{13}$.}
\label{fig:topkcase1}
\end{center}
\end{minipage}
\end{figure}

At the moment $s$ hits $p_1'$, let $C'_1$ be the subset of $C_1$
to the right and above of $s$, i.e., $C'_1$ is $C_1$ excluding
the triangle formed by $v_1$ and $s$ (e.g., see
Fig.~\ref{fig:topkcase1}).
We partition $C'_1$ into three subregions in the following way (e.g., see
Fig.~\ref{fig:topkcase1}). Let
$C_{11}'$ be the region of $C'_1$ on the left of the vertical
line through $q_3$. Let
$C_{12}'$ be the region of $C'_1$ below the horizontal
line through $q_3$. Let $C_{13}'$ be the remaining part of $C'_1$.
For simplicity of discussion, we assume the region
$C_{i1}'$ does not contain the point $p_1'$ for any $1\leq i\leq 3$.

Denote by $p''_i$ the nearest point to $q_3$ in $P\cap C_{1i}'$, for
each $1\leq i\leq 3$. Hence,  one of $p''_i$ for $i=1,2,3$ must be
$p$, i.e., the second nearest point of $P\cup C_1$ to $q_3$.
We insert the above three points to $H$, and consequently, $p_2'$ is the point of
$H$ nearest to $q_3$. It remains to find the above three points.

The point $p_1'$ partitions $s$ into two sub-segments: let $s_1$ be
the sub-segment bounding $C'_{11}$ and $s_2$ be the one bounding
$C'_{12}$.  The point $p_1''$ can be found by a parallel-track segment dragging
query by dragging the segment $s_1$ upwards. However, there is an issue for the
approach. Since $p_1'$ is an endpoint of $s_1$, the above query may
still return $p_1'$ as the answer. We use a little trick to get around
the issue. Due to our general position assumption that no two points lie
on the same line of slope $\pm 1$, $s_1$ does not contain any other point
of $P$ than $p_1'$. Instead of dragging $s_1$, we drag another
segment $s_1'$ which can be viewed as shifting $s_1$ upwards by a
sufficiently small value $\delta$. We can determine $\delta$ in the
preprocessing step such that there is no point of $P$ strictly between
the $-1$-sloped line containing $s_1$ and the $-1$-sloped
line containing $s_1'$. For example,
one way to determine such a $\delta$ is to sort all points of $P$
by their projections to any line of slope $1$ and then find the minimum
distance between any two adjacent projections. Hence, the point
$p_1''$ is the first point hit by dragging $s_1'$ upwards.

Similarly the point $p_2''$ can also be
found by a parallel-track segment dragging query and the same trick is
applicable.

For the point $p_3''$, it can be
found by an out-of-corner segment dragging query. Note that the corner
in this case is the point $p_1'$, and thus the query may also return
$p_1'$ as the answer. This issue can also be easily resolved as
follows. The data structure in \cite{ref:MitchellL192}
for answering the out-of-corner segment dragging queries reduces the
problem into a point location problem in a planar subdivision. For the
above out-of-corner segment dragging query, we will need to locate a vertex
corresponding to $p_1'$ in the planar subdivision and the vertex is
incident to two faces: one face is for $p_1'$ and the other is for
$p_3''$. Hence, to return $p_3''$ as the answer, we only need to
report the face that does not correspond to $p_1'$.

As a summary, we can insert three new points into $H$
such that $p_2'$ is the nearest point of $H$ to $q_3$,
and the three points are found by
three segment-dragging queries, each taking $O(\log n)$ time.

\item If $p_1'\in C_2$, we use the similar approach.
Recall that $p'_1$ is found by dragging a
parallel-track segment $s$ of slope $-1$ rightwards. Let $p$ be the
second nearest point of $P\cap C_2$ to $q_3$. Clearly, $p_2'$ is the
nearest point of $H\cup \{p\}$ to $q_3$. To find $p$, after $s$ hits
$p_1'$, we can keep dragging $s$ rightwards and $p$ is the next point
that will be hit by $s$. Hence, at the moment $s$ hits $p_1'$, the
point $p$ can be found by another parallel-track segment dragging
query by dragging $s$ rightwards. Here, since $p_1'$ is on $s$,
to avoid issue that
the query returns $p_1'$ as the answer, we use the same trick as in
the first case, i.e., instead of dragging $s$, we drag a segment $s'$
that is $\delta$ distance to the right of $s$.

\item If $p_1'\in C_2$, the case is symmetric to the first case and we
omit the details.
\end{enumerate}

In summary, we can insert at most three new points into $H$
such that $p_2'$ is the nearest point of $H$ to $q_3$,
and the three points are found by
three segment-dragging queries, each taking $O(\log n)$ time.

To find the third nearest point $p_3'$, we use the similar approach. In
general, to determine $p_i'$ with $1\leq i\leq k$, we have a candidate
set $H$ such that $p_i'$ is the nearest point of $H$ to $q_3$. After
$p_i'$ is determined, we remove it from $H$, and then to find
$p_{i+1}'$, we find at most three new points by segment-dragging
queries and insert them to $H$ in the similar approach as above, such
that $p_{i+1}'$ is the nearest point of $H$ to $q_3$. We use a min-heap to
maintain the candidate set $H$, where the ``key'' of each point of $H$ is its
distance to $q_3$. Note that the size of $H$ is no more than $3k$ in the
entire algorithm and $k\leq n$.
Hence, the running time of the entire algorithm is $O(k\log n)$.
The lemma thus follows.
\end{proof}

By the preceding discussion and Lemma \ref{lem:400},
we have the following theorem.

\begin{theorem}\label{theo:20}
Given a set $P$ of $n$ points in the plane, after $O(n\log n)$ time
and
$O(n)$ space preprocessing, we can answer each $L_1$ \annmax\ top-$k$ query in
$O(m+k\log n)$ time for any set $Q$ of $m$ query points and any
integer $k$ with $1\leq k\leq n$.
\end{theorem}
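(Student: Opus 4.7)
The plan is to wire together the preprocessing of Theorem \ref{theo:10} with the per-cell extension of Lemma \ref{lem:400}. In the preprocessing phase I would build, once and for all, the parallel-track and out-of-corner segment-dragging data structures of \cite{ref:ChazelleAn88,ref:MitchellL192} on $P$, together with the auxiliary $\delta$ (the minimum gap between adjacent projections of points of $P$ onto a line of slope $1$, and similarly slope $-1$) that Lemma \ref{lem:400} uses to perturb segments off the just-found point. All of this is $O(n\log n)$ time and $O(n)$ space.

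Given a query set $Q$ of $m$ points together with the parameter $k$, I would first scan $Q$ to compute $Q_{\max}$ in $O(m)$ time, then compute $\fvd(Q)$ (on the at most four points of $Q_{\max}$) in $O(1)$ time as in the proof of Theorem \ref{theo:10}. By Lemma \ref{lem:10}, for every $p\in P$ we have $g(p,Q)=d(p,q_i)$ whenever $p\in C(q_i)$, so the global top-$k$ points of $P$ with respect to $g(\cdot,Q)$ are exactly the $k$ smallest among the (at most) four sorted sequences obtained by taking, for each $q_i\in Q_{\max}$, the $k$ nearest points of $P\cap C(q_i)$ to $q_i$ in increasing order of distance. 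For each $q_i$ I invoke Lemma \ref{lem:400} (or its analogue on $q_i$, which is symmetric by the same case analysis of Lemma \ref{lem:30}), producing a sorted list of length $\min(k,|P\cap C(q_i)|)$ in $O(k\log n)$ time.

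Finally, I would merge the at most four sorted lists and extract the first $k$ elements. Since each list has length at most $k$, a standard linear-time merge (e.g., a 4-way merge) finishes in $O(k)$ time and outputs the top-$k$ points ordered by $g(p,Q)$. Summing the costs gives $O(m)+O(1)+4\cdot O(k\log n)+O(k)=O(m+k\log n)$ time per query, while the space and preprocessing bounds are inherited directly from Theorem \ref{theo:10} and Lemma \ref{lem:400}.

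The only mildly delicate point is the justification that taking the per-cell top-$k$ and merging yields the global top-$k$: this follows because the cells $C(q_i)$ partition the plane (in particular $P$) and, inside $C(q_i)$, the ranking of points by $g(\cdot,Q)$ coincides with the ranking by $d(\cdot,q_i)$, so any point in $P\cap C(q_i)$ that is ranked after the first $k$ within its own cell is already dominated by at least $k$ candidates and can be safely discarded. No new combinatorial machinery is needed beyond Lemmas \ref{lem:10}, \ref{lem:30}, and \ref{lem:400}; the main verification work is just confirming that the three symmetric cases for $C(q_i)$ ($i\ne 3$) admit the same three-subregion decomposition used in the proof of Lemma \ref{lem:400}, which they do by reflection.
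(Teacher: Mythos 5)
Your proposal is correct and follows essentially the same route as the paper: compute $Q_{\max}$ and $\fvd(Q)$ as in Theorem \ref{theo:10}, invoke Lemma \ref{lem:400} (and its symmetric analogues) once per cell of $\fvd(Q)$ to obtain at most four sorted lists of length at most $k$, and merge them in $O(k)$ time, for a total of $O(m+k\log n)$ per query with the same $O(n\log n)$-time, $O(n)$-space preprocessing. Your explicit justification that the per-cell top-$k$ lists suffice (since the cells partition $P$ and within $C(q_i)$ the ranking by $g(\cdot,Q)$ coincides with the ranking by $d(\cdot,q_i)$) is exactly the observation the paper relies on implicitly.
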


\section{The \annmax\ in the $L_2$ Metric}
\label{sec:L2}

In this section, we present our results for the $L_2$ version of
\annmax\ queries. Given any query point set $Q$, our goal is to find
the point $p\in P$ such that $g(p,Q)=\max_{q\in Q}d(p,q)$ is minimized for
the $L_2$ distance $d(p,q)$, and we use $\psi(Q)$ to denote the
sought point above.

We follow the similar algorithmic scheme as in the $L_1$ version.
Let $Q_H$ be the set of points of
$Q$ that are on the convex hull of $Q$.
It is commonly known that for any point $p$ in the
plane, its farthest point in $Q$ is in $Q_H$, and in
other words, the farthest Voronoi diagram of $Q$, denoted by
$\fvd(Q)$, is determined by
the points of $Q_H$
\cite{ref:deBergCo08,ref:LiGr11}.  Note that the size of $\fvd(Q)$ is
$O(|Q_H|)$ \cite{ref:deBergCo08}.

Consider any point $q\in Q_H$. Denote by $C(q)$ the cell of $q$ in
$\fvd(Q)$. 
The cell $C(q)$ is a convex and unbounded polygon
\cite{ref:deBergCo08}. Let $f(q)$ be the closest point of $P\cap
C(q)$ to $q$. Similarly to Lemma \ref{lem:20}, we have the following
lemma.

\begin{lemma}\label{lem:42}
If for a point $q'\in Q$, $d(f(q'),q')\leq d(f(q),q)$ holds
for any $q\in Q_H$, then
$f(q')$ is $\psi(Q)$.
\end{lemma}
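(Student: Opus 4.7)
The plan is to mirror the argument used for Lemma \ref{lem:20} in the $L_1$ case, taking advantage of the fact that $\fvd(Q)$ partitions the plane according to which point of $Q$ is farthest. First I would observe that the statement implicitly assumes $q' \in Q_H$ (since otherwise $C(q')$ and $f(q')$ are not defined), and that the cells $\{C(q) : q \in Q_H\}$ cover the plane and are pairwise disjoint except on a set of measure zero; by breaking ties arbitrarily on this boundary set I can induce a partition $P = \bigsqcup_{q \in Q_H} P_q$, where $P_q = P \cap C(q)$.

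The key step is to note that for any $q \in Q_H$ and any $p \in P_q$, the definition of the cell $C(q)$ in the farthest Voronoi diagram of $Q$ gives $\max_{q'' \in Q} d(p, q'') = d(p, q)$, so $g(p, Q) = d(p, q)$ for every $p \in P_q$. Combined with the definition $f(q) = \arg\min_{p \in P_q} d(p, q)$, this yields
\begin{equation*}
\min_{p \in P} g(p, Q) = \min_{q \in Q_H} \min_{p \in P_q} g(p, Q) = \min_{q \in Q_H} \min_{p \in P_q} d(p, q) = \min_{q \in Q_H} d(f(q), q).
\end{equation*}
Thus $\psi(Q)$ is $f(q^*)$ where $q^* \in Q_H$ achieves $\min_{q \in Q_H} d(f(q), q)$.

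Finally, I apply the hypothesis: since $d(f(q'), q') \leq d(f(q), q)$ for all $q \in Q_H$, the minimum on the right-hand side is realized at $q'$, hence $f(q')$ minimizes $g(\cdot, Q)$ over $P$, which is exactly $\psi(Q)$. The only subtlety is the handling of boundary points of $\fvd(Q)$ (where a point of $P$ might lie on a bisector and therefore belong to multiple cells); I would resolve this by an arbitrary but fixed assignment so that the decomposition of $P$ is a true partition, which does not affect the value $\min_{q \in Q_H} d(f(q), q)$ since shared boundary points contribute the same value to both cells. No heavy geometric argument is needed — this is essentially bookkeeping on top of the defining property of the farthest Voronoi diagram.
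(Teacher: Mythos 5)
Your proof is correct and follows essentially the same route as the paper, which states Lemma \ref{lem:42} without a separate proof by analogy to Lemma \ref{lem:20}: partition $P$ by the cells of $\fvd(Q)$, use the defining property of each cell to replace $g(p,Q)$ by $d(p,q)$, and exchange the order of minimization. Your explicit handling of points of $P$ lying on cell boundaries is a small but welcome addition that the paper leaves implicit.
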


Hence, to find $\psi(Q)$, it is sufficient to determine $f(q)$ for
each $q\in Q$, as follows.

Consider any point $q\in Q$. To find $f(q)$, we first
triangulate the cell $C(q)$ and let $Tri(q)$ denote the
triangulation. For each triangle $\triangle\in Tri(q)$, we will find the
closest point to $q$ in $P\cap \triangle$,
denoted by $f_{\triangle}(q)$. Consequently, $f(q)$ is
the closest point to $q$ among the points $f_{\triangle}(q)$ for all
$\triangle\in Tri(q)$.

Out goal is to determine
$\psi(Q)$. To this end, we will need to triangulate each cell of
$\fvd(Q)$ and compute $f_{\triangle}(q)$ for
each $\triangle\in Tri(q)$
and for each $q\in Q$. Since the size of $\fvd(Q)$ is $O(|Q_H|)$, which is
$O(m)$, we have the following lemma.

\begin{lemma}\label{lem:45}
If the closest point $f_{\triangle}(q)$ to $q$ in $P\cap \triangle$
can be determined in $O(t_{\triangle})$ time
for any triangle $\triangle$ and any point $q$ in the plane,
then $\psi(Q)$ can be found in $O(m\cdot t_{\triangle})$ time.
\end{lemma}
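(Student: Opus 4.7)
The plan is to follow exactly the decomposition outlined in the paragraphs preceding the lemma, observing that everything reduces to a counting argument on the total number of triangles produced by triangulating $\fvd(Q)$.

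First, I would compute $Q_H$ (the convex hull of $Q$) in $O(m\log m)$ time, and then construct the farthest-point Voronoi diagram $\fvd(Q)$ of $Q_H$ in $O(m\log m)$ time by a standard algorithm. As noted in the paper, $\fvd(Q)$ has total combinatorial complexity $O(|Q_H|)=O(m)$, and each cell $C(q)$ is a convex (possibly unbounded) polygon.

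Next, I would triangulate every cell. For the unbounded cells, I would first clip $\fvd(Q)$ against an axis-aligned bounding box large enough to contain all points of $P$; since clipping a convex cell of combinatorial size $k$ yields a bounded convex polygon of size $O(k)$, the total complexity is still $O(m)$. A fan triangulation of each bounded convex polygon from one of its vertices then produces, in linear time per cell, a set $Tri(q)$ of triangles with $\sum_{q\in Q_H}|Tri(q)|=O(m)$. Clipping does not change which points of $P$ lie in $C(q)$, so $f_\triangle(q)$ for $\triangle\in Tri(q)$ is still the closest point of $P\cap\triangle$ to $q$.

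Now I would apply the hypothesized procedure: for every pair $(q,\triangle)$ with $q\in Q_H$ and $\triangle\in Tri(q)$, compute $f_\triangle(q)$ in $O(t_\triangle)$ time. Summed over all $O(m)$ triangles this costs $O(m\cdot t_\triangle)$. For each $q\in Q_H$, setting $f(q)$ to be the $f_\triangle(q)$ minimizing $d(f_\triangle(q),q)$ and finally taking the $f(q')$ minimizing $d(f(q'),q')$ over $q'\in Q_H$ gives $\psi(Q)$ by Lemma~\ref{lem:42}. The aggregation adds only $O(m)$ time, and the $O(m\log m)$ preprocessing for the hull and $\fvd(Q)$ is absorbed into $O(m\cdot t_\triangle)$ under the mild assumption $t_\triangle=\Omega(\log m)$ (which will hold in our applications; otherwise one simply adds the $O(m\log m)$ term explicitly).

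There is no serious obstacle: the only mildly delicate point is handling unbounded cells, which the bounding-box clipping resolves without inflating the combinatorial size of the diagram. Everything else is bookkeeping on top of the linear complexity of $\fvd(Q)$.
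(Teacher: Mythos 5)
Your proposal is correct and matches the paper's approach: the paper states this lemma without a separate proof, relying on the preceding discussion (compute $\fvd(Q)$ of size $O(m)$, triangulate each cell, query each triangle, and aggregate via Lemma~\ref{lem:42}). Your additional care with clipping unbounded cells and with the $O(m\log m)$ term for constructing $\fvd(Q)$ only makes explicit details the paper leaves implicit.
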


In the following, we present our algorithms for computing
$f_{\triangle}(q)$  for any triangle $\triangle$ and any point $q$ in
the plane.
If we know the Voronoi diagram of the points in
$P\cap \triangle$, then $f_{\triangle}(q)$ can be determined in
logarithmic time. Hence, the problem becomes how to maintain the
Voronoi diagrams for the points in $P$
such that given any triangle $\triangle$, the Voronoi diagram
information of the
points in $P\cap\triangle$ can be obtained efficiently.
To this end, we choose to augment the $O(n)$-size simplex range
(counting) query data
structure in \cite{ref:MatousekEf92}, as shown in the following
lemma.

\begin{lemma}\label{lem:50}
After $O(n\log n\log\log n)$ time and $O(n\log\log n)$ space
preprocessing on $P$, we can compute the point $f_{\triangle}(q)$ in
$O(\sqrt{n}\log^{O(1)} n)$ time for any triangle $\triangle$
and any point $q$ in the plane.
\end{lemma}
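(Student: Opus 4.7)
The plan is to augment Matou\v{s}ek's simplex range searching data structure \cite{ref:MatousekEf92} with Voronoi diagrams, so that the nearest-point query inside a triangle reduces to $O(\sqrt n\,\log^{O(1)} n)$ point-location subqueries, each of cost $O(\log n)$. First I would build on $P$ the partition tree of \cite{ref:MatousekEf92}, using the parameter setting that yields $O(n\log\log n)$ space and answers simplex (hence triangle) range searching in $O(\sqrt n\,\log^{O(1)} n)$ time. At every node $v$ of the tree, with canonical subset $S_v\subseteq P$ contained in a simplex $\sigma_v$, I would additionally precompute the Voronoi diagram of $S_v$ together with a planar point-location structure on it (e.g., Kirkpatrick's hierarchy), so that for any point $q$ in the plane the closest point of $S_v$ to $q$ can be retrieved in $O(\log|S_v|)=O(\log n)$ time.

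To answer a query $(\triangle,q)$, I would descend the partition tree exactly as in standard simplex range searching. At each visited node $v$: if $\sigma_v\subseteq\triangle$ then the entire set $S_v$ lies inside $\triangle$, and I query the Voronoi point-location structure at $v$ with $q$ to obtain a candidate closest point from $S_v$; if $\sigma_v\cap\triangle=\emptyset$ the node is skipped; otherwise I recurse into the children of $v$. At the end, I return the candidate nearest to $q$ as $f_\triangle(q)$. The number of canonical nodes that contribute candidates is $O(\sqrt n\,\log^{O(1)} n)$, and each contributes a single $O(\log n)$-time subquery, giving the claimed query time.

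For the space and preprocessing bounds, each Voronoi diagram with its point-location structure on $S_v$ uses $O(|S_v|)$ space and can be constructed in $O(|S_v|\log|S_v|)$ time. With the variant of the partition tree from \cite{ref:MatousekEf92} in which each point of $P$ belongs to $O(\log\log n)$ canonical subsets, we have $\sum_v|S_v|=O(n\log\log n)$, so the total space is $O(n\log\log n)$ and the total preprocessing time is $O(n\log n\log\log n)$. The main obstacle is selecting and working with precisely this multi-level variant of the partition tree that realizes the $\log\log n$ trade-off between point-replication depth and query time; once that variant is in hand, the Voronoi augmentation is straightforward. A minor technicality, which does not affect correctness, is that the point-location query at $v$ returns the nearest point of $S_v$ to $q$ even when $q\notin\sigma_v$; this is exactly what we want, since each returned point is merely a candidate and we take the overall minimum at the end.
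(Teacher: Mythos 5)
Your proposal is correct and follows essentially the same approach as the paper: augment Matou\v{s}ek's partition tree (whose depth is $O(\log\log n)$ with disjoint canonical subsets at each level) with a Voronoi diagram and point-location structure at every node, and at query time replace the counting step at each fully-contained canonical node by an $O(\log n)$ nearest-point query, returning the overall minimum. The only cosmetic difference is in the preprocessing accounting: you bound the Voronoi construction by $\sum_v |S_v|\log|S_v| = O(n\log n\log\log n)$, while the paper uses the recurrence $T(n)=\sqrt{n}\,T(\sqrt{n})+O(n\log n)$ to get $O(n\log n)$ for that part; both meet the stated bound.
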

\begin{proof}
We first briefly discuss the data structure in
\cite{ref:MatousekEf92} and then augment it for our purpose.
Note that the data structure in \cite{ref:MatousekEf92} is for any
fixed dimension and our discussion below only focuses on the planar
case, and thus each simplex below refers to a triangle.

A {\em simplicial partition} of the point set $P$ is
a collection $\Pi=\{(P_1,\triangle_1),\ldots,(P_k,\triangle_k)\}$,
where the $P_i$'s are pairwise disjoint subsets (called the {\em
classes} of $\Pi$) forming a partition of
$P$, and each $\triangle_i$ is a possibly unbounded
simplex containing the points of $P_i$. The {\em size} of $\Pi$ is $k$.
The simplex $\triangle_i$ may also contain other points in $P$
than those in $P_i$. A simplicial partition is called {\em special} if
$\max_{1\leq i\leq k}\{|P_i|\}<2\cdot \min_{1\leq i\leq k}\{|P_i|\}$,
i.e., all the classes are of roughly the same size.

The data structure in \cite{ref:MatousekEf92} is a partition tree,
denoted by $T$, based on constructing
special simplicial partitions on $P$ recursively (e.g., see
Fig.~\ref{fig:partitiontree}). The leaves of $T$ form a
partition of $P$ into constant-sized subsets.
Each internal node $v\in T$
is associated with a subset $P_v$
(and its corresponding simplex $\triangle_v$)
of $P$ and a special simplicial
partition $\Pi_v$ of size $|P_v|^{1/2}$ of $P_v$.
We assume the root of $T$ is associated with $P$ and its corresponding
simplex is the entire plane.
The {\em cardinality} of $P_v$ (i.e.,
$|P_v|$) is stored at $v$. Each internal node $v$ has
$|P_v|^{1/2}$ children that correspond to the classes of $\Pi_v$.
Thus, if $v$ is a node lying at a distance
$i$ from the root of $T$, then $|P_v|=O(n^{1/2^i})$, and the depth
of $T$ is $O(\log\log n)$. It is shown in \cite{ref:MatousekEf92} that
$T$ has $O(n)$ space and can be constructed in $O(n\log n)$ time.

\begin{figure}[t]
\begin{minipage}[t]{\linewidth}
\begin{center}
\includegraphics[totalheight=1.2in]{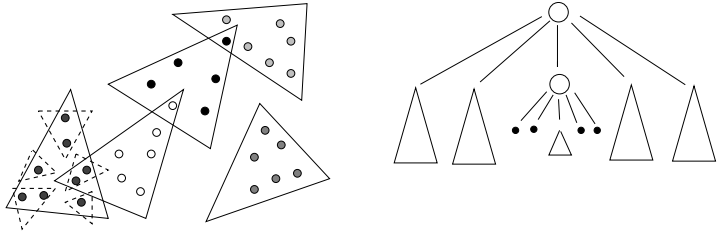}
\caption{\footnotesize Illustrating a simplicial partition and the
corresponding partition tree. The dotted triangles form the partition
computed recursively for the class corresponding to the middle child
of the root; the resulting five ``subclasses'' are stored in five subtrees below
the middle child.}
\label{fig:partitiontree}
\end{center}
\end{minipage}
\end{figure}

For each query simplex $\triangle$, the goal is to compute the number
of points in $P\cap \triangle$. We start from the root of $T$.
For each internal node $v$,
we check its simplicial partition $\Pi_v$ one by one, and handle
directly those contained in $\triangle$ or disjoint from $\triangle$; we
proceed with the corresponding child nodes for the other simplices.
Each of the latter ones must be intersected by at least one of the
lines bounding $\triangle$. If $v$ is a leaf node, for each point $p$
in $P_v$, we determine directly whether $p\in \triangle$.
Each query takes $O(n^{1/2}(\log n)^{O(1)})$
time \cite{ref:MatousekEf92}.

For our purpose, we augment the partition tree $T$ in the following
way. For each node $v$, we compute and maintain the Voronoi
diagram of $P_v$, denoted
by $\vd(P_v)$. Since at each level of $T$ the point subsets $P_v$'s
are pairwise disjoint, comparing with the original tree, our
augmented tree has $O(n)$ additional space at each level. Since $T$
has $O(\log\log n)$ levels, the total space of our augmented tree is
$O(n\log \log n)$. For the running time, we claim that the total time for building the augmented tree is still $O(n\log n)$ although we have to build Voronoi diagrams for the nodes. Indeed, let $T(n)$ denote the time for building the Voronoi diagrams in the entire algorithm. We have $T(n)=\sqrt{n}\cdot T(\sqrt{n})+O(n\log n)$, and thus, $T(n)=O(n\log n)$ by solving the above recurrence.

Consider any query triangle $\triangle$ and any point $q$.
We start from the root of $T$. For each internal node $v$,
we check its simplicial partition $\Pi_v$, i.e., check the children
of $v$ one by one.
Consider any child $u$ of $v$. If $\triangle_u$ is disjoint from
$\triangle$, we ignore it. If $\triangle_u$ is contained in
$\triangle$, then we compute in $O(\log n)$ time
the closest point of $P\cap \triangle_u$ to $q$ (and its distance to $q$) by using the Voronoi diagram $\vd(P_u)$ stored at the
node $u$. Otherwise, we proceed with the node $u$ recursively.
If $v$ is a leaf node, for each point $p$
in $P_v$, we compute directly the distance $d(q,p)$ if $p\in
\triangle$. Finally, $f_{\triangle}(q)$ is the closest point to $q$
among all points whose distances to $q$ have been computed above.

Comparing with the original simplex range query on $\triangle$, we
have $O(\log n)$ additional time on each node $u$ if
$\triangle_u$ is contained in $\triangle$, and clearly the number of
such nodes is bounded by $O(n^{1/2}(\log n)^{O(1)})$. Hence, the
total query time for finding $f_{\triangle}(q)$ is
$O(n^{1/2}(\log n)^{O(1)}\cdot \log n)$, which is
$O(n^{1/2}(\log n)^{O(1)})$.  The lemma thus follows.
\end{proof}

Similar augmentation may also be made on the $O(n)$-size
simplex data structure in
\cite{ref:MatousekRa93} and the recent randomized result in
\cite{ref:ChanOp12}. If more space are allowed, by using duality and
cutting trees \cite{ref:deBergCo08}, we can obtain the following result.

\begin{lemma}\label{lem:60}
After $O(n^{2+\epsilon})$ time and space
preprocessing on $P$, we can compute the point $f_{\triangle}(q)$ in
$O(\log n)$ time for any triangle $\triangle$ and any point
$q$ in the plane.
\end{lemma}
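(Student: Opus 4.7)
The plan is to use duality to turn each halfplane constraint into a point-in-arrangement test, cascade three levels of halfplane range searching (one per edge of the triangle), and attach a Voronoi diagram with point location to every canonical subset at the bottom of the cascade.

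First, I would view the query triangle $\triangle$ as the intersection $H_1\cap H_2\cap H_3$ of three halfplanes. In the point-line dual, each $p\in P$ becomes a line and each bounding line of $H_i$ becomes a point, so ``$p\in H_i$'' becomes a linear side test in the dual. For a single halfplane this is handled by the cutting-tree / arrangement machinery of \cite{ref:deBergCo08}: the $n$ dual lines are preprocessed so that a query amounts to locating a point in a hierarchical cutting, which in $O(\log n)$ time returns a disjoint family of canonical subsets whose union is $P\cap H_i$, with $O(n^{2+\epsilon})$ total preprocessing and space. Cascading this three times gives a three-level data structure where the second-level tree is built on top of each canonical subset produced by the first level, and the third-level tree on top of each subset produced by the second. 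The canonical subsets at any fixed level are pairwise disjoint (total size $O(n)$), so the three-level cascade still has $O(n^{2+\epsilon})$ space and preprocessing after absorbing the multiplicative blowups from the inner levels into a fresh $\epsilon$.

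Second, at every canonical subset $P_w$ sitting at the very bottom of the cascade I would store $\vd(P_w)$ together with a planar point-location structure on it, so that the closest point of $P_w$ to any query $q$ is retrievable in $O(\log n)$ time. Since $\sum_w|P_w|=O(n)$ at any fixed level, the Voronoi diagrams contribute only $O(n)$ extra per level and do not change the overall space bound. To answer a query $(\triangle,q)$, I would descend through the three levels, assemble the canonical subsets whose disjoint union equals $P\cap\triangle$, query each one's Voronoi diagram for its nearest point to $q$, and return the overall minimum as $f_{\triangle}(q)$.

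The delicate point, and the one I would expect to need the most care, is keeping the query time at $O(\log n)$ rather than at some larger polylog. A naive cutting tree visits $O(\log n)$ canonical subsets per level and would inflate the cost to $O(\log^{O(1)} n)$. The $O(n^{2+\epsilon})$ space budget is, however, exactly generous enough to replace each level by an arrangement-plus-point-location realization of halfplane range searching, in which a single point-location in an arrangement of $n$ lines returns the answer as the union of $O(1)$ canonical subsets in $O(\log n)$ time. The three-level cascade then accesses only $O(1)$ bottom-level subsets, and the $O(1)$ subsequent Voronoi lookups add only another $O(\log n)$, producing the claimed $O(\log n)$ per-triangle query time.
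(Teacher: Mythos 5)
Your overall architecture coincides with the paper's: dualize the three halfplanes bounding $\triangle$, build a three-level halfplane range searching structure based on cutting trees in $O(n^{2+\epsilon})$ time and space, attach a Voronoi diagram with point location to each bottom-level canonical subset, and arrange for constant-depth trees so that a query touches only $O(1)$ canonical subsets and costs $O(\log n)$. However, two of your supporting claims are wrong as stated. First, the canonical subsets at a fixed level of a \emph{cutting} tree are not pairwise disjoint with total size $O(n)$; that is a property of \emph{partition} trees. In a cutting tree a dual line lies entirely below many cells of the cutting, so the subsets stored at a single node already have superlinear total size --- this is precisely why the structure needs $O(n^{2+\epsilon})$ space. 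Your conclusion survives, but for a different reason: each Voronoi diagram (plus its point-location structure) has size linear in the canonical subset it is built on, so the augmentation multiplies the existing $O(n^{2+\epsilon})$ storage by only a constant.

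Second, your mechanism for reaching $O(\log n)$ query time --- realizing each level as the full arrangement of the $n$ dual lines so that one point location returns $O(1)$ canonical subsets --- does not cascade within the space budget. The arrangement has $\Theta(n^2)$ faces, and hanging a second-level structure (itself of size $O(k^{2+\epsilon})$ on a subset of size $k$, possibly $k=\Theta(n)$) off each face's canonical subset far exceeds $O(n^{2+\epsilon})$. The paper's device is the standard one you actually need: keep cutting trees, but give every node fan-out $n^{\delta}$ for a small constant $\delta>0$, so each tree has constant depth, each query visits $O(1)$ canonical subsets per level (with a point-location structure at each node to find the relevant cell in $O(\log n)$ time), and the space recurrence still solves to $O(n^{2+\epsilon})$. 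With that substitution your argument matches the paper's proof.
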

\begin{proof}
By using duality and cutting trees,
an $O(n^{2+\epsilon})$-size data structure can be built
in $O(n^{2+\epsilon})$ time for any $\epsilon>0$ such that each
simplex range (counting) query can be answered in $O(\log^3 n)$ time
\cite{ref:deBergCo08}. We can augment the data structure in a similar
way as in Lemma \ref{lem:50}. We only sketch it below.

The data structure in
\cite{ref:deBergCo08} has three levels. In the third
level, each tree node maintains the cardinality of the corresponding
canonical
subset of points. For our purpose, we explicitly maintain the Voronoi
diagram for each canonical subset in the third level.
Hence, our augmented data structure has four levels.
The preprocessing time and the space are the same as before.
The query algorithm is similar as before and the difference is that
when a canonical subset of points are all in the query triangle
$\triangle$, instead of counting the cardinality of the canonical
subset, we determine the closest point to $q$ in the canonical subset
by using the Voronoi diagram of the canonical subset. Hence, the total time of our query algorithm is $O(\log^4n)$ time.

To reduce the query time, a commonly known approach is to use cutting
trees with nodes having degrees $n^{\delta}$ for a certain small
constant $\delta>0$. Therefore, the heights of the trees are constant
rather than logarithmic, and consequently, the total query time
becomes $O(\log n)$. Note that we can use a point location data
structure \cite{ref:EdelsbrunnerOp86,ref:KirkpatrickOp83} to determine
in logarithmic time the child in which we continue the search, but
this does not affect the preprocessing time and space asymptotically.
The lemma thus follows.
\end{proof}

Lemmas \ref{lem:45}, \ref{lem:50}, and \ref{lem:60} together lead to
the following theorem.

\begin{theorem}
Given a set $P$ of $n$ points in the plane, after $O(n\log n\log\log n)$ time
and $O(n\log\log n)$ space preprocessing, we can answer each $L_1$ \annmax\ query in
$O(m\sqrt{n}\log^{O(1)} n)$ time for any set $Q$ of $m$ query points;
alternatively, after $O(n^{2+\epsilon})$ time and space preprocessing for
any $\epsilon>0$, we can answer each $L_2$ \annmax\ query
in $O(m\log n)$ time.
\end{theorem}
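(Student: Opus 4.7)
The plan is to combine Lemmas \ref{lem:42}, \ref{lem:45}, \ref{lem:50}, and \ref{lem:60}, so the proof is essentially bookkeeping rather than any new argument. First I would choose the preprocessing structure: for the first bound I would build the augmented partition tree of Lemma \ref{lem:50} on $P$, which takes $O(n\log n\log\log n)$ time and $O(n\log\log n)$ space; for the second bound I would build the augmented cutting-tree structure of Lemma \ref{lem:60}, which takes $O(n^{2+\epsilon})$ time and space. These are the only preprocessing steps, and they match the claimed preprocessing bounds.

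For the query phase, given $Q$ of size $m$, I would first compute the convex hull $Q_H$ in $O(m\log m)$ time and then the farthest-point Voronoi diagram $\fvd(Q)$, also in $O(m\log m)$ time; by the remarks preceding Lemma \ref{lem:42}, $\fvd(Q)$ has $O(|Q_H|)=O(m)$ complexity and each cell $C(q)$ is a convex (unbounded) polygon. Next I would triangulate each cell into a fan of (possibly unbounded) triangles; across the whole diagram this produces a set $\mathcal{T}$ of $O(m)$ triangles, where each triangle $\triangle\in\mathcal{T}$ is associated with a unique point $q\in Q_H$ so that $\triangle\subseteq C(q)$.

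Then I would apply Lemma \ref{lem:45}: for each triangle $\triangle\in\mathcal{T}$ with its associated $q$, compute $f_\triangle(q)$ in time $t_\triangle$ using the appropriate augmented data structure. Taking, for each $q$, the closest of the values $f_\triangle(q)$ over all $\triangle\subseteq C(q)$ gives $f(q)$, and then applying Lemma \ref{lem:42}, i.e., returning the $q$ minimizing $d(f(q),q)$, yields $\psi(Q)$. The total query cost is dominated by the $O(m)$ invocations of the triangle search subroutine, which is $O(m\sqrt{n}\log^{O(1)}n)$ using Lemma \ref{lem:50} and $O(m\log n)$ using Lemma \ref{lem:60}; the $O(m\log m)$ cost of computing $Q_H$, $\fvd(Q)$, and its triangulation is absorbed in either bound since $m\log m=O(m\log n)$ for $m\leq n$ (and the case $m>n$ is uninteresting as one can deduplicate $Q$ or note that the relevant information lies in $Q_H$).

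There is no real obstacle: the only subtlety worth double-checking is that the triangulation of $\fvd(Q)$ can accommodate unbounded cells, which is handled by intersecting with a sufficiently large bounding box (or, equivalently, by using unbounded triangles in the fan decomposition, which Lemmas \ref{lem:50} and \ref{lem:60} already allow since their triangle arguments may themselves be unbounded simplices). With that noted, the theorem follows immediately by chaining the four lemmas.
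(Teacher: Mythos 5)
Your proposal is correct and follows exactly the paper's route: the paper gives no separate proof for this theorem, stating only that it follows by combining Lemmas \ref{lem:45}, \ref{lem:50}, and \ref{lem:60} with the query pipeline (compute $Q_H$, build and triangulate $\fvd(Q)$, invoke the triangle subroutine $O(m)$ times, and select the minimizer via Lemma \ref{lem:42}) described in the text preceding Lemma \ref{lem:45}. Your additional bookkeeping about unbounded triangles and absorbing the $O(m\log m)$ overhead is consistent with that discussion and introduces nothing new.
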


\bibliographystyle{plain}


%

\end{document}